\def\textbf#1{{\bf #1}}
\def\be{\begin{equation}}
\def\ee{\end{equation}}
\def\ben{\begin{eqnarray}}
\def\een{\end{eqnarray}}
\def\eea{\end{array}}
\def\bea{\begin{array}}
\newcommand{\bei}{\begin{itemize}}
\newcommand{\eei}{\end{itemize}}
\newcommand{\ket}[1]{|#1\rangle}
\theoremstyle{definition}
\newtheorem{theorem}{Result}
\newtheorem{definition}{Theorem}
\newtheorem{lemma}{Lemma}
\def\x{{\bf x}}
\def\a{{\bf a}}
\def\w{{\bf w}}
\def\E{\mathbb{E}}
\begin{document}

\title{Full randomness from arbitrarily deterministic events}

\author{Rodrigo Gallego}
\affiliation{ICFO-Institut de Ciencies Fotoniques, Av. Carl
Friedrich Gauss, 3, 08860 Castelldefels, Barcelona, Spain}
\author{Lluis Masanes}
\affiliation{ICFO-Institut de Ciencies Fotoniques, Av. Carl
Friedrich Gauss, 3, 08860 Castelldefels, Barcelona, Spain}
\author{Gonzalo De La Torre}
\affiliation{ICFO-Institut de Ciencies Fotoniques, Av. Carl
Friedrich Gauss, 3, 08860 Castelldefels, Barcelona, Spain}
\author{Chirag Dhara}
\affiliation{ICFO-Institut de Ciencies Fotoniques, Av. Carl
Friedrich Gauss, 3, 08860 Castelldefels, Barcelona, Spain}
\author{Leandro Aolita}
\affiliation{ICFO-Institut de Ciencies Fotoniques, Av. Carl
Friedrich Gauss, 3, 08860 Castelldefels, Barcelona, Spain}
\author{Antonio Ac\'\i n}
\affiliation{ICFO-Institut de Ciencies Fotoniques, Av. Carl
Friedrich Gauss, 3, 08860 Castelldefels, Barcelona, Spain}
\affiliation{ICREA-Instituci\'o Catalana de Recerca i Estudis
Avan\c cats, Llu\'{i}s Companys 23, 08010 Barcelona, Spain}

%%%%%%%%%%%%%%%%%%%%%%%%%%%%%%%%%%%%%%%%%%%%%%%%%%%%%%
%%%%%%%%%%%%%%%%%%%%%%%%%%%%%%%%%%%%%%%%%%%%%%%%%%%%%%
\begin{abstract}
Do completely unpredictable events exist in nature? Classical theory, being fully deterministic, completely excludes fundamental randomness. On the contrary, quantum theory allows for randomness within its axiomatic structure. Yet, the fact that a theory makes prediction only in probabilistic terms does not imply the existence of any form of randomness in nature. The question then remains whether one can certify randomness independent of the physical framework used. While standard Bell tests~\cite{bell} approach this question from this perspective, they require prior perfect randomness, which renders the approach circular. Recently, it has been shown that it is possible to certify full randomness using almost perfect random bits~\cite{cr}. Here, we prove that full randomness can indeed be certified using quantum non-locality under the minimal possible assumptions: the existence of a source of arbitrarily weak (but non-zero) randomness and the impossibility of instantaneous signalling. Thus we are left with a strict dichotomic choice: either our world is fully deterministic or there exist in nature events that are fully random. Apart from the foundational implications, our results represent a quantum protocol for full randomness amplification, an information task known to be impossible classically~\cite{sv}. Finally, they open a new path for device-independent protocols under minimal assumptions.
\end{abstract}
%%%%%%%%%%%%%%%%%%%%%%%%%%%%%%%%%%%%%%%%%%%%%%%%%%%%%%
%%%%%%%%%%%%%%%%%%%%%%%%%%%%%%%%%%%%%%%%%%%%%%%%%%%%%%
\maketitle
%%%%%%%%%%%%%%%%%%%%%%%%%%%%%%%%%%%%%%%%%%%%%%%%%%%%%%
%%%%%%%%%%%%%%%%%% %%%%%%%%%%%%%%%%%%%%%%%%%%%%%%%%%%%%
Understanding whether nature is deterministically pre-determined
or there are intrinsically random processes  is a fundamental
question that has attracted the interest of multiple thinkers,
ranging from philosophers and mathematicians to physicists or
neuroscientists. Nowadays this question is also important from a
practical perspective, as random bits constitute a valuable
resource for applications such as cryptographic protocols,
gambling, or the numerical simulation of physical and biological
systems.

Classical physics is a deterministic theory. Perfect knowledge of
the positions and velocities of a system of classical particles at
a given time, as well as of their interactions, allows one to
predict their future (and also past) behavior with total
certainty~\cite{laplace}. Thus, any randomness observed in
classical systems is not intrinsic to the theory but just a
manifestation of our imperfect description of the system.

The advent of quantum physics put into question this deterministic
viewpoint, as there exist experimental situations for which
quantum theory gives predictions only in probabilistic terms, even
if one has a perfect description of the preparation and
interactions of the system. A possible solution to this
classically counterintuitive fact was proposed in the early days
of quantum physics: Quantum mechanics had to be
incomplete~\cite{epr}, and there should be a complete theory
capable of providing deterministic predictions for all conceivable
experiments. There would thus be no room for intrinsic randomness,
and any apparent randomness would again be a consequence of our
lack of control over hypothetical ``hidden variables" not
contemplated by the quantum formalism.

Bell's no-go theorem~\cite{bell}, however, implies that
hidden-variable theories are inconsistent with quantum mechanics.
Therefore, none of these could ever render a deterministic
completion to the quantum formalism. More precisely, all
hidden-variable theories compatible with a local causal structure
predict that any correlations among space-like separated events
satisfy a series of inequalities, known as Bell inequalities. Bell
inequalities, in turn, are violated by some correlations among
quantum particles. This form of correlations defines the
phenomenon of quantum non-locality.

Now, it turns out that quantum non-locality does not  necessarily
imply the existence of fully unpredictable processes in nature.
The reasons behind this are subtle. First of all, unpredictable
processes could be certified only if the no-signalling principle
holds. This states that no instantaneous communication is
possible, which imposes in turn a local causal structure on
events, as in Einstein's special relativity. In fact, Bohm's
theory is both deterministic and able to reproduce all quantum
predictions~\cite{bohm}, but it is incompatible with
no-signalling. Thus, we assume throughout the validity of the
no-signalling principle. Yet, even within the no-signalling
framework, it is still not possible to infer the existence of
fully random processes only from the mere observation of non-local
correlations. This is due to the fact that Bell tests require
measurement settings chosen at random, but the actual randomness
in such choices can never be certified. The extremal example is
given when the settings are determined in advance. Then, any Bell
violation can easily be explained in terms of deterministic
models. As a matter of fact, super-deterministic models, which
postulate that all phenomena in the universe, including our own
mental processes, are fully pre-programmed, are by definition
impossible to rule out.

These considerations imply that the strongest result on the
existence of randomness one can hope for using quantum
non-locality is stated by the following possibility: Given a
source that produces an arbitrarily small but non-zero amount of
randomness, can one still certify the existence of completely
random processes? The main result of this work is to provide an
affirmative answer to this question. Our results, then, imply that
the existence of correlations as those predicted by quantum
physics forces us into a dichotomic choice: Either we postulate
super-deterministic models in which all events in nature are fully
pre-determined, or we accept the existence of fully unpredictable
events.

Besides the philosophical and physics-foundational implications,
our results provide a protocol for perfect randomness
amplification using quantum non-locality. Randomness amplification
is an information-theoretic task whose goal is to use an input
source $\mathcal{S}$ of imperfectly random bits to produce perfect
random bits that are arbitrarily uncorrelated from all the events
that may have been a potential cause of them, i.e. arbitrarily
free. In general, $\mathcal{S}$ produces a sequence of bits $x_1,
x_2, \ldots x_j, \ldots$, with $x_j=0$ or 1 for all $j$, see
Fig.~\ref{lightcone}. Each bit $j$ contains some randomness, in
the sense that the probability $P\,(x_j|e)$ that it takes a given
value $x_j$, conditioned on any pre-existing variable $e$, is such
that
%%%%%%%%%%%%%%%%%%%%%%%%%%%%%%%%%%%%%%%%%%%%%%%%%%%%%%%%%%%%%%%%%%%%%%
\begin{figure}[t]
\begin{center}
  \includegraphics[width=8.5cm,angle=0]{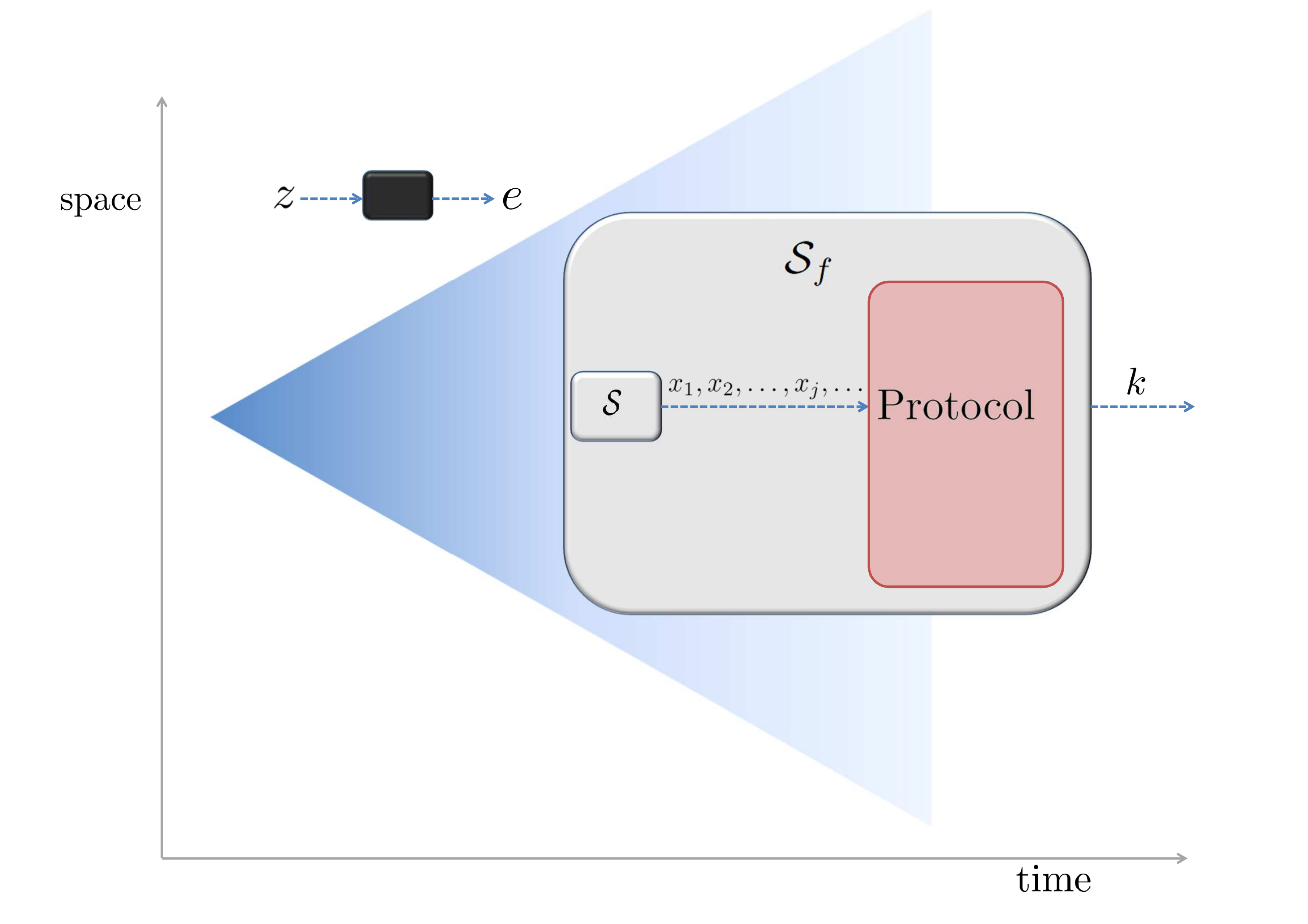}
 \caption{\textbf{Local causal structure and randomness
amplification}. A  source $\mathcal{S}$ produces a sequence $x_1,
x_2, \ldots x_j, \ldots$ Change $x_j$ in the figure to $x_j,
\ldots$ of imperfect random bits. The goal of randomness
amplification is to produce a new source $\mathcal{S}_f$ of
perfect random bits, that is, to process the initial bits to get a
final bit $k$ fully uncorrelated (free) from any potential cause
of it. All space-time events outside the future light-cone of $k$
may have been in its past light-cone before and therefore
constitute a potential cause of it. Any such event can be modeled
by a measurement $z$, with an outcome $e$, on some physical
system. This system may be under the control of an adversary Eve,
interested in predicting the value of $k$.} \label{lightcone}
\end{center}
\end{figure}
\begin{equation}
\label{esource}
\epsilon\leq P\,(x_j|e)\leq 1-\epsilon
\end{equation}
%%%%%%%%%%%%%%%%%%%%%%%%%%%%%%%%%%%%%%%%%%%%%%%%%%%%%%%%%%%%%%%%%%%%%%
for all $j$ and $e$, where $0<\epsilon\leq 1/2$. The variable $e$
can correspond to any event that could be a possible cause of bit
$x_j$. Therefore, $e$ represents events contained in the
space-time region lying outside the future light-cone of $x_j$.
Free random bits correspond to $\epsilon=\frac{1}{2}$; while
deterministic ones, i.e. those predictable with certainty by an
observer with access to $e$, to $\epsilon=0$. More precisely, when
$\epsilon=0$ the bound \eqref{esource} is trivial and no
randomness can be certified. We refer to $\mathcal{S}$ as an
$\epsilon$-source, and to any bit satisfying \eqref{esource} as an
$\epsilon$-free bit. The aim is then to generate, from arbitrarily
many uses of $\mathcal{S}$, a final source $\mathcal{S}_f$ of
$\epsilon_f$ arbitrarily close to $1/2$. If this is possible, no
cause $e$ can be assigned to the bits produced by $\mathcal{S}_f$,
which are then fully unpredictable. Note that efficiency issues,
such as the rate of uses of $\mathcal{S}$ required per final bit
generated by $\mathcal{S}_f$ do not play any role in randomness
amplification. The relevant figure of merit is just the quality,
measured by $\epsilon_f$, of the final bits. Thus, without loss of
generality, we restrict our analysis to the problem of generating
a single final free random bit $k$.

Santha and Vazirani proved that randomness amplification is
impossible using classical resources~\cite{sv}. This is in a sense
intuitive, in view of the absence of any intrinsic randomness in
classical physics. In the quantum regime, randomness amplification
has been recently studied by Colbeck and Renner~\cite{cr}. There,
$\mathcal{S}$ is used to choose the measurement settings by two
distant observers, Alice and Bob, in a Bell test~\cite{chained}
involving two entangled quantum particles. The measurement outcome
obtained by one of the observers, say Alice, in one of the
experimental runs (also chosen with $\mathcal{S}$) defines the
output random bit. Colbeck and Renner proved how input bits with
very high randomness, of $0.442<\epsilon\leq 0.5$, can be mapped
into arbitrarily free random bits of $\epsilon_f\rightarrow 1/2$,
and conjectured that randomness amplification should be possible
for any initial randomness~\cite{cr}. Our results also solve this
conjecture, as we show that quantum non-locality can be exploited
to attain \emph{full randomness amplification}, i.e. that
$\epsilon_f$ can be made arbitrarily close to $1/2$ for any
$0<\epsilon\leq 1/2$.

Before presenting the ingredients of our proof, it is worth
commenting on previous works on randomness in connection with
quantum non-locality. In~\cite{nature} it was shown how to bound
the intrinsic randomness generated in a Bell test. These bounds
can be used for device-independent randomness expansion, following
a proposal by Colbeck~\cite{colbeck}, and to achieve a quadratic
expansion of the amount of random bits~\cite{nature}
(see~\cite{amp,pm,cwi,vv} for further works on device-independent
randomness expansion). Note however that, in randomness expansion,
one assumes instead, from the very beginning, the existence of an
input seed of free random bits, and the main goal is to expand
this into a larger sequence. The figure of merit there is the
ratio between the length of the final and initial strings of free
random bits. Finally, other recent works have analyzed how a lack
of randomness in the measurement choices affects a Bell
test~\cite{KPB,bg,hall} and the randomness generated in
it~\cite{singapore}. %The models considered there, however, are
%very specific and cannot be phrased in terms of $\epsilon$-source.
%\leandro{The specific models considered there
%cannot be phrased in terms of $\epsilon$-source. However, for all
%of them, for sufficiently small randomness in the choices, there
%is a deterministic explanation for the non-local correlations}.

Let us now sketch the realization of our final source
$\mathcal{S}_f$. We use the input  $\epsilon$-source $\mathcal{S}$
to choose the measurement settings in a multipartite Bell test
involving a number of observers that depends both on the input
$\epsilon$ and the target $\epsilon_f$. After verifying that the
expected Bell violation is obtained, the measurement outcomes are
combined to define the final bit $k$. For pedagogical reasons, we
adopt a cryptographic perspective and assume the worst-case
scenario where all the devices we use may have been prepared by an
adversary Eve equipped with arbitrary non-signalling resources,
possibly even supra-quantum ones. In the preparation, Eve may have
also had access to $\mathcal{S}$ and correlated the bits it
produces with some physical system at her disposal, represented by
a black box in Fig. \ref{lightcone}. Without loss of generality,
we can assume that Eve can reveal the value of $e$ at any stage of
the protocol by measuring this system. Full randomness
amplification is then equivalent to proving that Eve's correlations with $k$ can be made arbitrarily small.

%%%%%%%%%%%%%%%%%%%%%%%%%%%%%%%%%%%%%%%%%%%%%%%%%%%%%%%%%%%%%%%%%%%%%%
\begin{figure*}[ht!]
\begin{center}
  \includegraphics[width=15cm,angle=0]{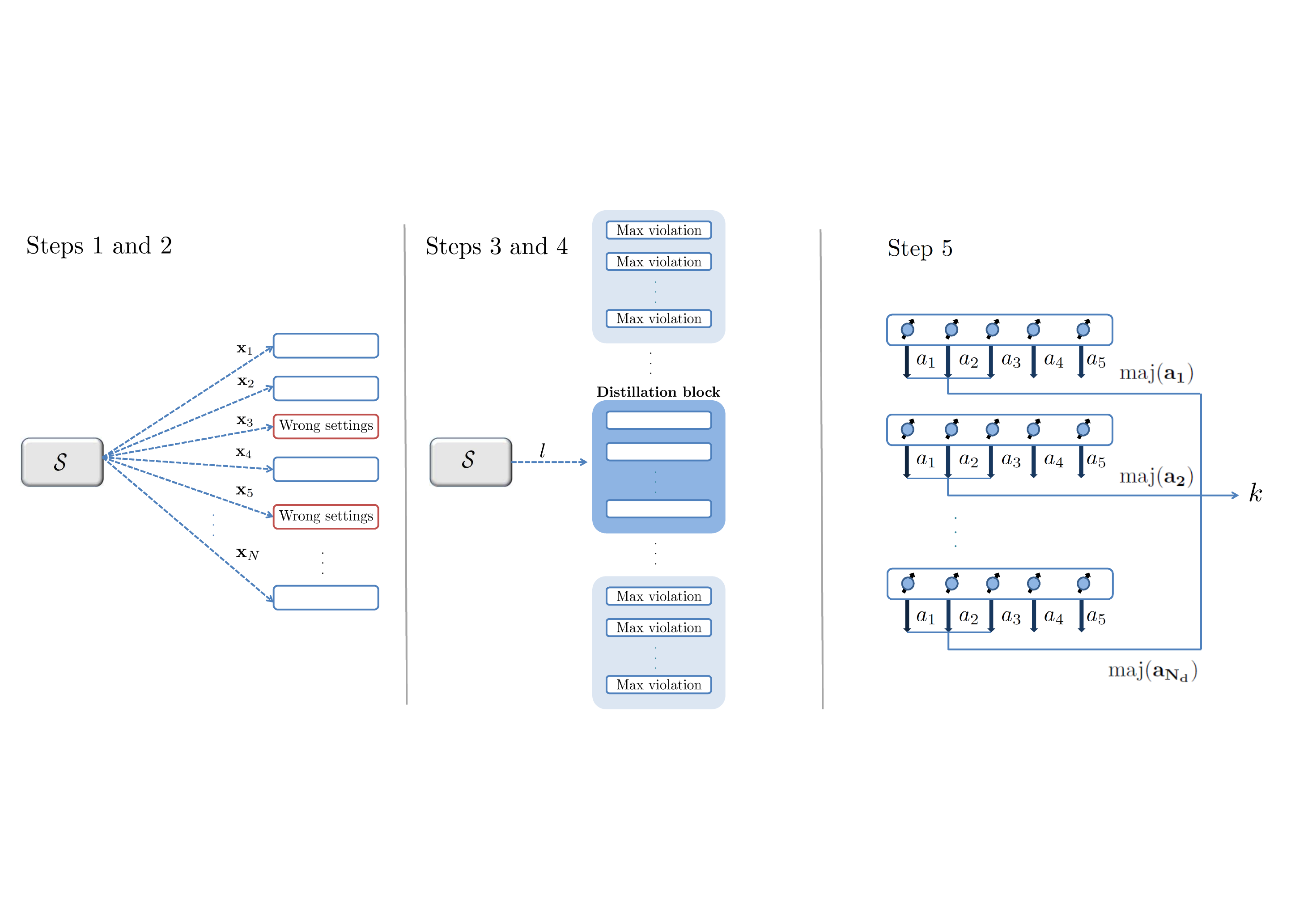}
\caption{\textbf{Protocol for full randomness amplification based
on quantum non-locality}. In the first two steps, all $N$
quintuplets measure their devices, where the choice of measurement
is done using the $\epsilon$-source $\mathcal{S}$; the quintuplets
whose settings happen not to take place in the five-party Mermin
inequality are discarded (in red). In steps 3 and 4, the remaining
quintuplets are grouped into blocks. One of the blocks is chosen
as the distillation block, using again $\mathcal{S}$, while the
others are used to check the Bell violation. In the fifth step,
the random bit $k$ is extracted from the distillation block.}
\label{Fullprotocol}
\end{center}
\end{figure*}
%%%%%%%%%%%%%%%%%%%%%%%%%%%%%%%%%%%%%%%%%%%%%%%%%%%%%%%%%%%%%%%%%%%%%%

Bell tests for which quantum correlations achieve the maximal
non-signalling violation, also known as
Greenberger-Horne-Zeilinger (GHZ) paradoxes~\cite{ghz}, are
necessary for randomness amplification. This is due to the fact
that unless the maximal non-signalling violation is attained, for
sufficiently small $\epsilon$, Eve may fake the observed
correlations with classical deterministic resources. This attack
ceases to be possible when the maximal non-signalling violation is
observed, as Eve is forced to prepare only those non-local
correlations attaining the maximal violation. GHZ paradoxes are
however not sufficient. Consider for instance the GHZ paradox
given by the tripartite Mermin Bell inequality~\cite{mermin}. One
can see that Eve can predict with certainty any function of the
measurement outcomes and still deliver the maximal violation, for
all $0\leq\epsilon\leq 1/2$ (see Appendix \ref{Sec2}).

For more parties though, the latter happens not to hold any
longer. In fact, consider any correlations attaining the maximal
violation of the five-party Mermin inequality. Take the bit
corresponding to the majority-vote function of the outcomes of any
subset of three out of the five observers, say the first three.
This function is equal to zero if at least two of the three bits
are equal to zero, and equal to one otherwise. We show in Appendix \ref{Sec2} that Eve's predictability on this bit is at
most 3/4. This is our first result:

\begin{theorem}
\label{Theorem1maintext} Given an $\epsilon$-source with any
$0<\epsilon\leq 1/2$, and quantum five-party non-local resources,
an intermediate $\epsilon_i$-source of $\epsilon_i=1/4$ can be
obtained.
\end{theorem}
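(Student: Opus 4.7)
The plan is to leverage the GHZ-paradox character of the five-party Mermin inequality---the fact that its maximal non-signalling value coincides with its maximal quantum value---to pin Eve's strategy down to the maximally-violating face of the non-signalling polytope, and then to show that inside this face the majority vote of any three outcomes cannot be biased beyond $3/4$.

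First I would recall that the quantum 5-qubit GHZ state saturates the non-signalling bound of the five-party Mermin inequality. Since this is the algebraic maximum, observing it in the Bell test forces every component in any convex decomposition of the honest parties' box, conditioned on Eve's variable $e$, to individually attain the same maximum: any sub-maximal component would lower the overall value. This is the usual GHZ-paradox argument and means that after conditioning on $e$, Eve's strategy is a convex combination of extremal non-signalling boxes saturating the Mermin bound, with the label specifying which box is used in each run playing the role of her guessing side information.

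Second I would characterise these extremal maximally-violating boxes. Each such box enforces a deterministic parity relation of the form $a_1\oplus a_2\oplus a_3\oplus a_4\oplus a_5 = f(x_1,\ldots,x_5)$ on the Mermin-supported input tuples, while its lower-order marginals are constrained by no-signalling. Exploiting the permutation symmetry of the Mermin inequality and local relabellings of inputs and outputs, one reduces the analysis to a small set of representative extremal boxes. For each representative, and for each choice of inputs for three designated parties (say the first three), I would compute the marginal distribution of those three outcomes and, from it, the probability that their majority vote takes a fixed value. The content of Theorem 1 is that this probability is at most $3/4$ in every case. By convexity this bound then propagates to arbitrary mixtures, so $P(k\mid e)\leq 3/4$. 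The matching lower bound $P(k\mid e)\geq 1/4$ comes from the global output-flip symmetry of the maximally-violating face, which maps each allowed box to another allowed box and the majority vote to its negation. Together these give $1/4\leq P(k\mid e)\leq 3/4$, i.e.\ an $\epsilon_i$-free bit with $\epsilon_i=1/4$.

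The main obstacle is the per-box majority-vote bound in the second step. A brute-force fallback is a linear program over the maximally-violating face of the five-party no-signalling polytope restricted to Mermin-supported inputs. A more conceptual route is to argue that the global five-output parity constraint obstructs any extremal box from simultaneously biasing three chosen outputs too strongly in the same direction: such a bias would force a compensating bias on the remaining two outputs that is inconsistent with no-signalling between the two subsets of parties. Either route yields the $3/4$ ceiling, and combined with the preceding steps gives the theorem.
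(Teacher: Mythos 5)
Your scaffolding is right and matches the paper's logic: the algebraic maximum of the five-party Mermin expression pins every component of Eve's decomposition to the maximally-violating face, so it suffices to bound the majority-vote bias over that face, and the $1/4$ lower bound then comes for free from the output-flip symmetry (the paper states this as ``the same result holds for $P(\mathrm{maj}(\a)=1\,|\,\x_0)$''). The paper makes this convexity step implicit, so it is good that you spell it out. Where you diverge is in how the per-box $3/4$ ceiling is actually established. The paper does exactly what you call the ``brute-force fallback'': it solves the linear program
\begin{equation*}
P_{\max}=\max_{P}\ P(\mathrm{maj}(\a)=0\,|\,\x_0)\quad\text{subject to } I(\a,\x)\cdot P(\a|\x)=0 \text{ for all } \x\in\mathcal{X},
\end{equation*}
finding $P_{\max}=3/4$ numerically but exactly, and confirms that the attack saturating it (fix one of $a_1,a_2,a_3$ deterministically, leave the other two uniform) is optimal. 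Your preferred ``conceptual'' route --- reduce by permutation and relabelling symmetry to a small set of extremal boxes, then argue that the global parity constraint plus no-signalling between the $\{1,2,3\}$ and $\{4,5\}$ subsets blocks a stronger joint bias --- is a genuinely different tactic, but you do not carry it out, and it is not obvious it is as easy as you suggest: the authors explicitly remark in Appendix D that they expected such an analytic argument to exist, ``however, we were unable to find it,'' and resorted to linear programming precisely for this reason. Also be aware that the five-party no-signalling polytope has a large number of extremal points even after symmetry reduction, so ``compute the marginal for each representative'' is itself close to an LP in disguise. In short: your proposal is correct as stated because the LP fallback you name is the paper's actual proof, but the conceptual route you favour is an unverified sketch and should not be presented as the primary argument without actually exhibiting the parity/no-signalling obstruction in a form that yields the exact constant $3/4$.
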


The partial unpredictability in the five-party Mermin Bell test is
the building block of our protocol. To complete it, we must equip
it with two essential components: ({\it i}) an {\it estimation
procedure} that verifies that the untrusted devices do yield the
required Bell violation; and ({\it ii}) a {\it distillation
procedure} that, from sufficiently many $\epsilon_i$-bits
generated in the 5-party Bell experiment, distills a single final
$\epsilon_f$-source of $\epsilon_f \rightarrow 1/2$. To these
ends, we consider a more complex Bell test involving $N$ groups of
five observers (quintuplets) each, as depicted in Fig.
\ref{Fullprotocol}. The steps in the protocol are described in Box
1.

%%%%%%%%%%%%%%%%%%%%%%%%%%%%%%%%%%%%%%%%%%%%%%%%%%%%%%%%%%%%
\begin{table}[t]
\begin{center}
\begin{tabular}{|p{8.05cm}|}
\hline \noindent\textbf{Box 1: Protocol for Randomness
Amplification } \noindent\begin{enumerate}
\item Every observer measures his device in one of two settings chosen at random by the input $\epsilon$-source $\mathcal{S}$.
\item Every quintuplet whose settings combination does not appear in the five-party Mermin Bell test is discarded. If the quintuplets left are fewer than $N/3$, abort.
\item Group the  quintuples left into $N_b$ blocks of equal size $N_d$. Choose a {\it distillation block} at random with $\mathcal{S}$.
\item If the outcomes of any quintuplet not in the distillation block are inconsistent with the maximal violation of the five-party Mermin Bell test, abort.
\item Distill the final bit from the distillation block. This is done in the following way. The
majority vote $\text{maj}(\a)$ among for instance the outcomes
$a_1$, $a_2$ and $a_3$ of the first three users is computed for
each quintuplet. Then, a function $f$ maps the resulting $N_d$
bits into the final bit $k$.
\end{enumerate} \\
\hline
\end{tabular}
\end{center}
\end{table}
%%%%%%%%%%%%%%%%%%%%%%%%%%%%%%%%%%%%%%%%%%%%%%%%%%%%%%%%%%%%

%Steps 1 to 4 correspond to the statistical estimation, and
%consumes most quintuplets.
In the appendices we prove using techniques
from~\cite{masanes} that, if the protocol is not aborted, the
final bit produced by the protocol is indistinguishable from an
ideal random bit uncorrelated to the eavesdropper. Thus, the
output free random bits satisfy universally-composable
security~\cite{Canetti}, the highest standard of cryptographic
security, and could be used as seed for randomness expansion or
any other protocol.

Finally, we must show that quantum resources can indeed
successfully implement our protocol. It is immediate to see that
the qubit measurements $X$ or $Y$ on the quantum state
$\ket{\Psi}=\frac{1}{\sqrt{2}}(\ket{00000}+\ket{11111})$, with
$\ket{0}$ and $\ket{1}$ the eigenstates of the $Z$ qubit basis,
yield correlations that maximally violate the five-partite Mermin
inequality in question. This completes our main result.

\begin{theorem}[{\bf Main Result}]
\label{Theorem1maintext} Given an $\epsilon$-source with any
$0<\epsilon\leq 1/2$,  a perfect free random bit $k$ can be
obtained using quantum non-local correlations.
\end{theorem}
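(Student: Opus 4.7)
The plan is to bootstrap Result 1 into full amplification by combining it with the multi-block protocol of Box 1. Result 1 already gives us, from each quintuplet that maximally violates the five-party Mermin inequality, a single bit (the majority vote of any three outcomes) whose predictability to an arbitrary non-signalling Eve is at most $3/4$. That is the randomness-generating primitive; the remaining work is to \emph{certify} this violation from untrusted devices and to \emph{distill} a near-perfect bit from many such primitives using only the weak source $\mathcal{S}$.

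For the estimation step I would argue as follows. After step 2 of the protocol a linear fraction of the $N$ quintuplets survives (otherwise we abort). In step 4 we inspect every quintuplet outside the randomly chosen distillation block and demand consistency with maximal Mermin violation. Because the distillation block is selected using $\mathcal{S}$, which is only $\epsilon$-free, Eve has some advantage in guessing its identity, but she cannot make it certain for any $\epsilon>0$. Consequently any strategy that corrupts a non-negligible fraction of blocks is detected with probability bounded away from zero, so conditioning on non-abort forces the statistics on the distillation block to be close to those attaining maximal violation. Making this quantitative against general non-signalling attacks is exactly the role of the techniques of~\cite{masanes}: one extracts a trace-distance bound between the ideal and real conditional distributions that decays with the number of test blocks $N_b-1$.

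For the distillation step, each quintuplet in the distillation block contributes, via the majority vote, a bit with conditional predictability at most $3/4$ given Eve and given the other bits, where no-signalling across quintuplets is what permits chaining Result 1. These $N_d$ bits therefore constitute an intermediate source with $\epsilon_i=1/4$, strong enough to be amplified. I would then apply a deterministic extractor $f$ tailored to Santha--Vazirani-type sources against non-signalling side information, mapping the $N_d$ bits to a single bit $k$ whose distance from uniform, conditional on Eve, is some $\delta(N_d)\to 0$. Choosing $N_d$ large enough as a function of the target $\epsilon_f$, $N_b$ large enough for the estimation bound, and finally $N$ large enough that step 2 leaves at least $N/3$ quintuplets with overwhelming probability, produces the desired $\epsilon_f$-source for any $\epsilon_f<1/2$. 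Combining with the explicit quantum realization via the five-qubit GHZ state $\ket{\Psi}$ and the $X,Y$ measurements completes the argument.

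The main obstacle is the third ingredient: constructing and analysing an extractor $f$ that works against a non-signalling (possibly supra-quantum) Eve jointly correlated with all $N_d$ primitives, while the bits fed into $f$ are themselves generated using the same imperfect source $\mathcal{S}$ that drives block selection and setting choices. Standard classical extractor theorems do not apply, and one must exploit the full no-signalling geometry across all $5N_d$ parties plus Eve to chain the per-quintuplet predictability bound of Result 1 uniformly. Ruling out subtle attacks that correlate device outputs with the downstream $\mathcal{S}$-bits, and simultaneously preserving universally composable security in the sense of~\cite{Canetti}, is the technically demanding part of the proof.
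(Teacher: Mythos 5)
Your overall skeleton matches the paper's: the five-party Mermin GHZ paradox as the primitive (Result~1), the multi-block protocol of Box~1, estimation via the blocks not chosen for distillation, a deterministic distilling function $f$ applied to the majority votes, and the five-qubit GHZ realization. The estimation argument you sketch — that $\epsilon$-freeness of the block-selection bits limits Eve's advantage in guessing which block distills, so conditioning on non-abort forces near-maximal Mermin violation in the distilling block — is essentially Lemma~\ref{lemmaestimation}.

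Where your plan diverges, and has a genuine gap, is the distillation step. You propose to show that the $N_d$ majority-vote bits form an intermediate $\epsilon_i = 1/4$ Santha--Vazirani source conditioned on Eve by ``chaining'' Result~1 across quintuplets using no-signalling, and then to apply ``a deterministic extractor tailored to Santha--Vazirani-type sources against non-signalling side information.'' Neither half works as stated. The chaining does not follow from no-signalling: no-signalling keeps the $i$-th quintuplet's marginal independent of remote \emph{inputs}, but conditioning on remote \emph{outputs} (earlier majority votes, Eve's outcome $e$) can change both the local marginal and the local Bell value, so the per-quintuplet $3/4$ bound of Result~1 is not guaranteed to survive as a running conditional bound. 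More fundamentally, there is no deterministic extractor for SV sources — that is exactly the Santha--Vazirani no-go you cite — and the paper does not evade it by some conditional-SV refinement; it avoids the SV-source intermediary entirely.

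What Lemma~\ref{lem1} actually proves is a direct operator-level bound on $\sum_k \max_z \sum_e |P(k,e|y,z) - \tfrac{1}{2}P(e|y,z)|$ by $6\sqrt{N_d}\,(\alpha C + \beta I)^{\otimes N_d}\cdot P(B|Y)$, with no per-bit min-entropy statement anywhere. Two ingredients, absent from your sketch, make this possible: (i) Lemma~\ref{num}, via linear programming, produces vectors $\Lambda_w^{\x_0}$ orthogonal to the no-signalling subspace so that the rewritten majority-vote vectors $\Gamma_w^{\x_0}=M_w^{\x_0}+\Lambda_w^{\x_0}$ obey the pointwise inequalities~\eqref{c1}--\eqref{gm}, dominated by $\alpha C + \beta I$; (ii) Lemma~\ref{L1} shows, by a Chernoff-type concentration bound and a union bound over the $4^{5N_d}$ input/output sequences, that a uniformly random $f:\{0,1\}^{N_d}\to\{0,1\}$ satisfies the required product inequality with positive probability. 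The resulting $f$ is not an SV extractor and cannot be one; it is a function tailored to the vectors $\Gamma_w^i$ arising from the Mermin Bell operator. When the estimation step forces near-maximal violation in the distilling block, the surviving term in the bound is $\approx \alpha^{N_d}$ with $\alpha<1$, which is what drives $P(\mathrm{guess})\to 1/2$. Your proposal correctly flags the Santha--Vazirani tension but supplies no mechanism that converts ``certified by a Bell violation'' into ``deterministically distillable''; that mechanism is the operator decomposition plus the concentration lemma, and it is the load-bearing content of the proof.
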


In summary, we have presented a protocol that, using quantum
non-local resources, attains \emph{full randomness amplification}.
This task is impossible classically and was not known to be
possible in the quantum regime. As our goal was to prove full
randomness amplification, our analysis focuses on the noise-free
case. In fact, the noisy case only makes sense if one does not aim
at perfect random bits and bounds the amount of randomness in the
final bit. Then, it should be possible to adapt our protocol in
order to get a bound on the noise it tolerates. Other open
questions that naturally follow from our results consist of
studying randomness amplification against quantum eavesdroppers,
or the search of protocols in the bipartite scenario.

From a more fundamental perspective, our results imply that there
exist experiments whose outcomes are fully unpredictable. The only
two assumptions for this conclusion are the existence of events
with an arbitrarily small but non-zero amount of randomness and
the validity of the no-signalling principle. Dropping the former
implies accepting a super-determinisitc view where no randomness
exist, so that we experience a fully pre-determined reality. This
possibility is uninteresting from a scientific perspective, and
even uncomfortable from a philosophical one. Dropping the latter,
in turn, implies abandoning a local causal structure for events in
space-time. However, this is one of the  most fundamental notions
of special relativity, and without which even the very meaning of
randomness or predictability would be unclear, as these concepts
implicitly rely on the cause-effect principle.

%%%%%%%%%%%%%%%%%%%%%%%%%%%%%%%%%%%%%%%%%%%%%%%%%%%%%%%%%%%%%%%%%%%%%%%%
\small{\textbf{Acknowledgements} We acknowledge support from the
ERC Starting Grant PERCENT, the EU Projects Q-Essence and QCS, the
Spanish MICIIN through a Juan de la Cierva grant and projects
FIS2010-14830, Explora-Intrinqra and CHIST-ERA DIQIP, an FI Grant
of the Generalitat de Catalunya, CatalunyaCaixa, and Fundaci\'o
Privada Cellex, Barcelona.}
%%%%%%%%%%%%%%%%%%%%%%%%%%%%%%%%%%%%%%%%%%%%%%%%%%%%%%%%%%%%%%%%%%%%%%%%

%%%%%%%%%%%%%%%%%%%%%%%%%%%%%%%%%%%%%%%%%%
% APPENDICES
%%%%%%%%%%%%%%%%%%%%%%%%%%%%%%%%%%%%%%%%%%

\appendix

\begin{widetext}\bigskip
%%%%%%%%%%%%%%%%%%%%%%%%%%%%%%%%%%%%%%%%%%%%%%%%%%%%%%
%%%%%%%%%%%%%%%%%%%%%%%%%%%%%%%%%%%%%%%%%%%%%%%%%%%%%%
\section{Mermin inequalities}
\label{Sec1}
%%%%%%%%%%%%%%%%%%%%%%%%%%%%%%%%%%%%%%%%%%%%%%%%%%%%%%
%In this section we describe our results with precision, and in the next one we prove them.
%%%%%%%%%%%%%%%%%%%%%%%%%%%%%%%%%%%%%%%%%%%%%%%%%%%%%%

The 5-party Mermin inequality~\cite{Mermin} plays a central role
in our construction. In each run of this Bell test, measurements
(inputs) $\x= (x_1, \ldots,x_5)$ on five distant black boxes
generate 5 outcomes (outputs) $\a= (a_1, \ldots,a_5)$, distributed
according to a non-signaling conditional probability distribution
$P(\a|\x)$. Both inputs and outputs are bits, as they can take two
possible values, $x_i,a_i\in\{0,1\}$ with $i=1,\ldots,5$. The
inequality can be written as
\begin{equation}
\label{ghz-I}
    \sum_{\a,\x} I (\a,\x) P (\a|\x) \geq 6\ ,
\end{equation}
with coefficients
\begin{equation}\label{I}
    I(\a, \x)= (a_1 \oplus a_2 \oplus a_3 \oplus a_4 \oplus a_5)\, \delta_{\x \in {\cal X}_0} +(a_1 \oplus a_2 \oplus a_3 \oplus a_4 \oplus a_5 \oplus 1)\, \delta_{\x \in{\cal X}_1}\ ,
\end{equation}
where
\begin{equation*}
    \delta_{\x \in {\cal X}_0} = \left\{
        \begin{array}{ll}
            1 & \mbox{ if } \x \in {\cal X}_0 \\
            0 & \mbox{ if } \x \notin {\cal X}_0
        \end{array}
    \right.\ ,
\end{equation*}
and
\begin{align*}
    {\cal X}_0 &= \{ (10000), (01000), (00100), (00010), (00001), (11111) \}    ,
\\
    {\cal X}_1 &= \{ (00111), (01011), (01101), (01110), (10011), (10101), (10110), (11001), (11010), (11100) \}    .
\end{align*}
That is, only half of all possible combinations of inputs, namely
those in ${\cal X} = {\cal X}_0 \cup {\cal X}_1$, appear in the
Bell inequality.

The maximal, non-signalling and algebraic, violation of the
inequality corresponds to the situation in which the left-hand
side of~(\ref{ghz-I}) is zero. The key property of inequality
\eqref{ghz-I} is that its maximal violation can be attained by
quantum correlations. In fact, Mermin inequalities are defined for
an arbitrary number of parties and quantum correlations attain the
maximal non-signalling violation for any odd number of
parties~\cite{merminN}. This violation is always attained by
performing local measurements on a GHZ quantum state.

%%%%%%%%%%%%%%%%%%%%%%%%%%%%%%%%%%%%%%%%%%%%%%%%%%%%%%
%In this section we describe our results with precision, and in the next one we prove them.
%%%%%%%%%%%%%%%%%%%%%%%%%%%%%%%%%%%%%%%%%%%%%%%%%%%%%%

\section{Partial unpredictability in the five-party Mermin
inequality} \label{Sec2}

Our interest in Mermin inequalities comes from the fact that, for
an odd number of parties, they can be maximally violated by
quantum correlations. These correlations, then, define a GHZ
paradox, which, as explained in the main text, is necessary for
full randomness amplification. As also mentioned in the main text,
GHZ paradoxes are however not sufficient. In fact, it is always
possible to find non-signalling correlations that (i) maximally
violate the 3-party Mermin inequality but (ii) assign a
deterministic value to any function of the measurement outcomes.
This observation can be checked for all unbiased functions mapping
$\{0,1\}^3$ to $\{0,1\}$ (there are  ${8 \choose 4}$ of those)
through a linear program analogous to the one used to prove the
next Theorem. For a larger number of parties, however, some
functions cannot be deterministically fixed to an specific value
while maximally violating a Mermin inequality, as implied by the
following Theorem.

\begin{definition}
\label{Theorem0} Let a five-party non-signaling conditional
probability distribution $P(\a|\x)$ in which inputs $\x= (x_1,
\ldots,x_5)$ and outputs $\a= (a_1, \ldots,a_5)$ are bits.
Consider the bit ${\rm maj} (\a) \in \{0,1\}$ defined by the
majority-vote function of any subset consisting of three of the
five measurement outcomes, say the first three, $a_1$, $a_2$ and
$a_3$. Then, all non-signalling correlations attaining the maximal
violation of the 5-party Mermin inequality are such that the
probability that ${\rm maj} (\a)$ takes a given value, say 0, is
bounded by
\begin{equation}\label{mermin5_unpr}
    1/4\leq P\left({\rm maj} (\a)=0\right)\leq 3/4 .
\end{equation}
\end{definition}

\begin{proof}
This result was obtained by solving a linear program. Therefore,
the proof is numeric, but exact. Formally, let $P(\a|\x)$ be a
$5$-partite no-signaling probability distribution. For $\x=\x_0\in
\cal{X}$, we performed the maximization,
% $\x=\x_0\in \cal{X}$, we performed the maximization,
\begin{equation}
\begin{aligned}
 P_{max} = \; &\max_{P}\; P( {\rm maj}(\a)=0|\x_0) \\
&\text{subject to}\\
&  I(\a,\x)\cdot P(\a|\x)=0\\
\end{aligned}
\end{equation}
which yields the value $P_{max}=3/4$. Since the same result holds
for $P( {\rm maj}(\a)=1|\x_0)$, we get the bound $1/4\leq P( {\rm
maj}(\a)=0)\leq 3/4$.

As a further remark, note that a lower bound to $P_{max}$ can
easily be obtained by noticing that one can construct conditional
probability distributions $P(\a|\x)$ that maximally violate
$5$-partite Mermin inequality \eqref{ghz-I} for which at most one
of the output bits (say $a_1$) is deterministically fixed to
either $0$ or $1$. If the other two output bits ($a_2,a_3)$ were
to be completely random, the majority-vote of the three of them
${\rm maj}(a_1,a_2,a_3)$ could be guessed with a probability of
$3/4$. Our numerical results say that this turns out to be an
optimal strategy.

\end{proof}

Theorem~\ref{Theorem0} implies Result 1 in the main text. Moreover
it constitutes the simplest GHZ paradox in which some randomness
can be certified. This paradox is the building block of our
randomness amplification protocol, presented in the next section.

%%%%%%%%%%%%%%%%%%%%%%%%%%%%%%%%%%%%%%%%%%%%%%%%%%%%%%%%%%%%%%%%%%%
\section{Protocol for full randomness amplification}
\label{Sec3}

In this section, we describe with more details the protocol
summarized in Box 1 of the main text. The protocol uses as
resources the $\epsilon$-source $\mathcal{S}$ and $5N$ quantum
systems. Recall that the bits produced by the source $\mathcal{S}$
are such that the probability $P\,(x_j|e)$ that bit $j$ takes a
given value $x_j$, conditioned on any pre-existing variable $e$,
is bounded by
\begin{equation}
\label{esource} \epsilon\leq P\,(x_j|e)\leq 1-\epsilon ,
\end{equation}
%%%%%%%%%%%%%%%%%%%%%%%%%%%%%%%%%%%%%%%%%%%%%%%%%%%%%%%%%%%%%%%%%%%%%%
for all $j$ and $e$, where $0<\epsilon\leq 1/2$. The bound, when
applied to $n$-bit strings produced by the $\epsilon$-source,
implies that
\begin{equation}
\label{esourcen} \epsilon^n\leq P\,(x_1,\ldots,x_n|e)\leq
(1-\epsilon)^n .
\end{equation}
Each of the quantum systems is abstractly modeled by a black box
with binary input $x$ and output $a$. The protocol processes
classically the bits generated by $\mathcal{S}$ and by the quantum
boxes. The result of the protocol is a classical symbol $k$,
associated to an abort/no-abort decision. If the protocol is not
aborted, $k$ encodes the final output bit, with possible values 0
or 1. Whereas when the protocol is aborted, no numerical value is
assigned to $k$ but the symbol $\varnothing$ instead, representing
the fact that the bit is empty. The formal steps of the protocol are:

\begin{enumerate}

\item $\mathcal{S}$ is used to generate $N$ quintuple-bits
$\x_{1},\ldots \x_{N}$, which constitute the inputs for the $5N$
boxes. The boxes then provide $N$ output quintuple-bits $\a_1,
\ldots \a_N$.

\item The quintuplets such that $\x \notin {\cal X}$ are
discarded. The protocol is aborted if the number of remaining
quintuplets is less than $N/3$.

\item The quintuplets left after step 2
are organized in $N_b$ blocks each one having $N_d$ quintuplets.
The number $N_b$ of blocks is chosen to be a power of 2. For the
sake of simplicity, we relabel the index running over the
remaining quintuplets, namely $\x_{1},\ldots \x_{N_bN_d}$ and
outputs $\a_{1},\ldots \a_{N_bN_d}$. The input and output of the
$j$-th block are defined as $y_j= (\x_{(j-1)N_d+1}, \ldots
\x_{(j-1)N_d+N_d})$ and $b_j= (\a_{(j-1)N_d+1}, \ldots
\a_{(j-1)N_d+N_d})$ respectively, with $j\in\{1,\ldots,N_b\}$. The
random variable $l\in \{1, \ldots N_b \}$ is generated by using
$\log_2 N_b$ further bits from $\mathcal{S}$. The value of $l$
specifies which block $(b_l, y_l)$ is chosen to generate $k$, i.e.
the distilling block. We define $(\tilde b, \tilde y) = (b_l,
y_l)$. The other $N_b-1$ blocks are used to check the Bell
violation.

\item The function
\begin{equation}\label{r}
    r[b,y] =
    \left\{\begin{array}{ll}
        1 & \mbox{ if }\   I(\a_1,\x_1) = \cdots =I(\a_{N_d}, \x_{N_d}) =0 \\
        0 & \mbox{ otherwise}
    \end{array} \right.\
\end{equation}
tells whether block $(b,y)$ features the right correlations
($r=1$) or the wrong ones ($r= 0$), in the sense of being
compatible with the maximal violation of inequality \eqref{ghz-I}.
This function is computed for all blocks but the distilling one.
The protocols is aborted unless all of them give the right
correlations,
\begin{equation}\label{g}
    g = \prod_{j=1, j\neq l}^{N_b} r[b_j, y_j] =
    \left\{\begin{array}{ll}
        1 & \mbox{ not abort} \\
        0 & \mbox{ abort}
    \end{array} \right.\ .
\end{equation}
Note that the abort/no-abort decision is independent of whether
the distilling block $l$ is right or wrong.

\item If the protocol is not aborted then $k$ is assigned a bit
generated from $b_l= (\a_1, \ldots \a_{N_d})$ as
\begin{equation}\label{k}
    k= f({\rm maj} (\a_1), \ldots {\rm maj} (\a_{N_d}))\ .
\end{equation}
Here $f: \{0,1\}^{N_d} \to \{0,1\}$ is a function characterized in
Lemma~\ref{L1} below, while ${\rm maj} (\a_i) \in \{0,1\}$ is the
majority-vote among the three first bits of the quintuple string
$\a_i$. If the protocol is aborted it sets $k=\varnothing$.
\end{enumerate}

At the end of the protocol, $k$ is potentially correlated with the
settings of the distilling block $\tilde y= y_l$, the bit $g$
in~\eqref{g}, and the bits
\[
    t= [l, (b_1, y_1),\ldots (b_{l-1}, y_{l-1}),
    (b_{l+1}, y_{l+1}), \ldots (b_{N_b}, y_{N_b})] .
\]
Additionally, an eavesdropper Eve might have a physical system
correlated with $k$, which she may measure at any instance of the
protocol. This system is not necessarily classical or quantum, the
only assumption about it is that measuring it does not produce
instantaneous signaling anywhere else. We label all possible
measurements Eve can perform with the classical variable $z$, and
with $e$ the corresponding outcome. In summary, after the
performance of the protocol all the relevant information is $k,
\tilde y, t, g, e, z$, with statistics described by an unknown
conditional probability distribution $P (k, \tilde y, t, g, e|z)$.

To assess the security of our protocol for full randomness
amplification, we have to show that the distribution describing
the protocol when not aborted is indistinguishable from the
distribution $P_{\rm ideal} (k, \tilde y, t, g, e|z g=1) = \frac1
2 P (\tilde y, t, e|z g=1)$ describing an ideal free random bit.
For later purposes, it is convenient to cover the case when the
protocol is aborted with an equivalent notation: if the protocol
is aborted, we define $P (k, \tilde y,t,e|z g=0) =
\delta_k^{\varnothing}\, P (\tilde y,t,e|z g=0)$ and $P_{\rm
ideal} (k, \tilde y,t,e|z g=0) = \delta_k^{\varnothing}\, P
(\tilde y,t,e|z g=0)$, where $\delta_k^{k\rq{}}$ is a
Kronecker\rq{}s delta. In this case, it is immediate that $P=
P_{\rm ideal}$, as the locally generated symbol $\varnothing$ is
always uncorrelated to the environment. To quantify the
indistinguishability between $P$ and $P_{\rm ideal}$, we consider
the scenario in which an observer, having access to all the
information $k,\tilde y, t, g, e, z$, has to correctly distinguish
between these two distributions. We denote by $P({\rm guess})$ the
optimal probability of correctly guessing between the two
distributions. This probability reads
\begin{equation}
\label{Pguess}
    P({\rm guess}) =  \frac 1 2 + \frac 1 4
    \sum_{k, \tilde y, t, g} \max_z \sum_e \Big| P (k, \tilde y,t,g,e|z) - P_{\rm ideal} (k, \tilde y, t, g, e|z) \Big| ,
\end{equation}
where the second term can be understood as (one fourth of) the
variational distance between $P$ and $P_{\rm ideal}$ generalized
to the case when the distributions are conditioned on an input $z$
\cite{PA}. If the protocol is such that this guessing probability
can be made arbitrarily close to 1/2, it generates a distribution
$P$ that is basically undistinguishable from the ideal one. This
is known as \lq\lq{}universally-composable security\rq\rq{}, and
accounts for the strongest notion of cryptographic security
(see~\cite{Canetti} and~\cite{PA}). It implies that the protocol
produces a random bit that is secure (free) in any context. In
particular, it remains secure even if the adversary Eve has access
to $\tilde y$, $t$ and $g$.

Our main result, namely the security of our protocol for full
randomness amplification, follows from the following Theorem.
\begin{definition}[{\bf Main Theorem}]
\label{Theorem1} Consider the previous protocol for randomness
amplification and the conditional probability distribution $P (k,
\tilde y,t,g,e|z)$ describing the statistics of the bits $k,
\tilde y, t, g$ generated during its execution and any possible
system with input $z$ and output $e$ correlated to them. The
probability $P({\rm guess})$ of correctly guessing between this
distribution and the ideal distribution $P_{\rm ideal} (k, \tilde
y, t, g, e|z)$ is such that
\begin{equation}
\label{t1}
    P({\rm guess})
\ \leq\
    \frac 1 2 + \frac{3 \sqrt{N_d}}{2}
    \left[ \alpha^{N_d} +
    2\, N_b^{\log_2(1-\epsilon)}
    \left( 32\beta \epsilon^{-5} \right)^{N_d} \right]\ .
\end{equation}
where $\alpha$ and $\beta$ are real numbers such that % defined in Lemma~\ref{num}
$0< \alpha <1 <\beta$.
\end{definition}
The right-hand side of~(\ref{t1}) can be made arbitrary close to
$1/2$, for instance by setting $N_b = \left( 32\, \beta\,
\epsilon^{-5} \right)^{2N_d/ |\log_2(1-\epsilon)|}$ and increasing
$N_d$ subject to the fulfillment of the condition $N_d N_b \geq
N/3$. [Note that $\log_2(1-\epsilon) <0$.] In the limit $P({\rm
guess}) \to 1/2$, the bit $k$ generated by the protocol is
indistinguishable from an ideal free random bit.

The proof of Theorem \ref{Theorem1} is provided in the next
section. Before moving to it, we would like to comment on the main
intuitions behind our protocol. As mentioned, the protocol builds
on the 5-party Mermin inequality because it is the simplest GHZ
paradox allowing some randomness certification. The estimation
part, given by step 4, is rather standard and inspired by
estimation techniques introduced in~\cite{bhk}, which were also
used in~\cite{cra} in the context of randomness amplification. The
most subtle part is the distillation of the final bit in step 5.
Naively, and leaving aside estimation issues, one could argue that
it is nothing but a classical processing by means of the function
$f$ of the imperfect random bits obtained via the $N_d$
quintuplets. But this seems in contradiction with the result by
Santha and Vazirani proving that it is impossible to extract by
classical means a perfect free random bit from imperfect
ones~\cite{sva}. This intuition is however wrong. The reason is
because in our protocol the randomness of the imperfect bits is
certified by a Bell violation, which is impossible classically.
Indeed, the Bell certification allows applying techniques similar
to those obtained in Ref.~\cite{PA} in the context of privacy
amplification against non-signalling eavesdroppers. There, it was
shown how to amplify the privacy, that is the unpredictability, of
one of the measurement outcomes of bipartite correlations
violating a Bell inequality. The key point is that the
amplification, or distillation, was attained in a {\it
deterministic} manner. That is, contrary to standard approaches,
the privacy amplification process described in~\cite{PA} does not
consume any randomness. Clearly, these deterministic techniques
are extremely convenient for our randomness amplification
scenario. In fact, the distillation part in our protocol can be
seen as the translation of the privacy amplification techniques of
Ref.~\cite{PA} to our more complex scenario, involving now 5-party
non-local correlations and a function of three of the measurement
outcomes.

%%%%%%%%%%%%%%%%%%%%%%%%%%%%%%%%%%%%%%%%%%%%%%%%%%%%%%
%%%%%%%%%%%%%%%%%%%%%%%%%%%%%%%%%%%%%%%%%%%%%%%%%%%%%%
\section{Proof of Theorem \ref{Theorem1}}
\label{Sec4}
%%%%%%%%%%%%%%%%%%%%%%%%%%%%%%%%%%%%%%%%%%%%%%%%%%%%%%

Before entering the details of the proof of Theorem
\ref{Theorem1}, let us introduce a convenient notation. In what
follows, we sometimes treat conditional probability distributions
as vectors. To avoid ambiguities, we explicitly label the vectors
describing probability distributions with the arguments of the
distributions in upper case. Thus, for example, we denote by
$P({\bf A|X})$ the $(2^5\times 2^5)$-dimensional vector with
components $P(\a|\x)$ for all $\a,\x \in \{0,1\}^5$. We also
denote by $I$ the vector with components $I(\a, \x)$ given
in~(\ref{I}). With this notation, inequality~(\ref{ghz-I}) can be
written as the scalar product
\[
    I \cdot P({\bf A|X}) = \sum_{\a,\x} I (\a,\x) P (\a|\x) \geq 6\ .
 \]
Any probability distribution $P(\a|\x)$ satisfies $C \cdot P({\bf
A|X}) = 1$, where $C$ is the vector with components $C(\a, \x)=
2^{-5}$. We also use this scalar-product notation for full blocks,
as in
\[
    I^{\otimes N_d} \cdot P(B | Y)
    =
    \sum_{\a_1, \ldots \a_{N_d}} \sum_{\x_1 , \ldots \x_{N_d}}
    \left[ \prod_{i=1}^{N_d} I(\a_i, \x_i) \right]
    P( \a_1, \ldots \a_{N_d} | \x_1, \ldots \x_{N_d})\ .
\]
Following our upper/lower-case convention, the vector $P(B|Y,e,z)$
has components $P(b|y,e,z)$ for all $b,y$ but fixed $e,z$.

The proof of Theorem \ref{Theorem1} relies on two crucial lemmas,
which are stated and proven in Sections \ref{sublemma1} and
\ref{sublemma2}, respectively. The first lemma bounds the
distinguishability between the distribution distilled from a block
of $N_d$ quintuplets and the ideal free random bit as function of
the Bell violation~\eqref{ghz-I} in each quintuplet. In
particular, it guarantees that, if the correlations of all
quintuplets in a given block violate inequality \eqref{ghz-I}
sufficiently much, the bit distilled from the block will be
indistinguishable from an ideal free random bit. The second lemma
is required to guarantee that, if the statistics observed in all
blocks but the distilling one are consistent with a maximal
violation of inequality \eqref{ghz-I}, the violation of the
distilling block will be arbitrarily large.

%%%%%%%%%%%%%%%%%%%%%%%%%%%%%%%%%%%%%%%%%%%%%%%%%%%%%%%%%
\begin{proof}[{\bf Proof of Theorem \ref{Theorem1}}]
We begin with the identity
\begin{equation}\label{decomp Pguess}
    P({\rm guess}) =
    P(g=0) P({\rm guess}| g=0) + P(g=1) P({\rm guess}| g=1)\ .
\end{equation}
As discussed, when the protocol is aborted ($g=0$) the
distribution generated by the protocol and the ideal one are
indistinguishable. In other words,
\begin{equation}\label{guess g=0}
    P({\rm guess}| g=0) = \frac 1 2\ .
\end{equation}
If $P(g=0) =1$ then the protocol is secure, though in a trivial
fashion. Next we address the non-trivial case where $P(g=1) >0$.

From  formula~(\ref{Pguess}), we have
\begin{eqnarray}
\nonumber &&
    P({\rm guess}|g=1)
\\ \nonumber &=&
    \frac 1 2 + \frac 1 4 \sum_{k, \tilde y, t} \max_z \sum_e
    \Big| P (k, \tilde y, t, e|z, g=1) - \frac 1 2 P (\tilde y, t, e|z, g=1) \Big|
\\ \nonumber &=&
    \frac 1 2 + \frac 1 4 \sum_{\tilde y, t} P(\tilde y, t| g=1)
    \sum_k \max_z \sum_e
    \Big| P (k,e|z, \tilde y, t, g=1) - \frac 1 2 P (e|z, \tilde y, t, g=1) \Big|
\\ \nonumber &\leq&
    \frac 1 2 + \frac 1 4 \sum_{\tilde y, t} P(\tilde y, t| g=1)
    \, 6 \sqrt{N_d}\, \left(\alpha C + \beta I \right)^{\otimes N_d}\!
    \cdot P(\tilde B|\tilde Y, t, g=1)
\\ \nonumber &=&
    \frac 1 2 + \frac{3 \sqrt{N_d}}{2}
    \left(\alpha C + \beta I \right)^{\otimes N_d} \! \cdot
    \sum_{\tilde y, t} P(\tilde y, t| g=1)  P(\tilde B|\tilde Y, t, g=1)
\\ \nonumber &=&
    \frac 1 2 + \frac{3 \sqrt{N_d}}{2}
    \left(\alpha C + \beta I \right)^{\otimes N_d} \! \cdot
    \sum_{t} P(t|g=1)  P(\tilde B|\tilde Y, t, g=1)
\\ \nonumber &=&
    \frac 1 2 + \frac{3 \sqrt{N_d}}{2}
    \left(\alpha C + \beta I \right)^{\otimes N_d} \! \cdot
    \sum_{t} P(\tilde B,t |\tilde Y, g=1)
\\ \label{eq10} &=&
    \frac 1 2 + \frac{3 \sqrt{N_d}}{2}
    \left(\alpha C + \beta I \right)^{\otimes N_d} \! \cdot
    P(\tilde B|\tilde Y, g=1)
\end{eqnarray}
where the inequality is due to Lemma~\ref{lem1} in
Section~\ref{sublemma1}, we have used the no-signalling condition
through $P(\tilde y, t|z, g=1)= P(\tilde y, t| g=1)$, in the
second equality, and Bayes rule in the second and sixth
equalities. From \eqref{eq10} and Lemma~\ref{lemmaestimation} in
Section~\ref{sublemma2}, we obtain
\begin{equation}
\label{boundp}
    P({\rm guess}|g=1)
\ \leq\
    \frac 1 2 + \frac{3 \sqrt{N_d}}{2}
    \left[ \alpha^{N_d} +
    \frac {2\, N_b^{\log_2(1-\epsilon)}} {P(g=1) }
    \left( 32 \beta \epsilon^{-5}\right)^{N_d} \right]\ .
\end{equation}
Finally, substituting bound \eqref{boundp} and
equality~(\ref{guess g=0}) into~(\ref{decomp Pguess}), we obtain
\begin{equation}
    P({\rm guess})
\ \leq\
    \frac 1 2 + \frac{3 \sqrt{N_d}}{2}
    \left[ P(g=1)\, \alpha^{N_d} +
    2\, N_b^{\log_2(1-\epsilon)}
    \left( 32 \beta \epsilon^{-5}\right)^{N_d} \right]\ ,
\end{equation}
which, together with $P(g=1) \leq 1$, implies~(\ref{t1}).
\end{proof}
%%%%%%%%%%%%%%%%%%%%%%%%%%%%%%%%%%%%%%%%%%%%%%%%%%%%%%
%%%%%%%%%%%%%%%%%%%%%%%%%%%%%%%%%%%%%%%%%%%%%%%%%%%%%%
%\section{Auxiliary lemmas}
%\label{Sec2bis}
%%%%%%%%%%%%%%%%%%%%%%%%%%%%%%%%%%%%%%%%%%%%%%%%%%%%%%
\subsection{Statement and proof of Lemma~\ref{lem1}}
\label{sublemma1}

As mentioned, Lemma~\ref{lem1} provides a bound on the
distinguishability between the probability distribution obtained
after distilling a block of $N_d$ quintuplets and an ideal free
random bit in terms of the Bell violation~\eqref{ghz-I} in each
quintuplet. The proof of Lemma~\ref{lem1}, in turn, requires two
more lemmas, Lemma~\ref{num} and Lemma~\ref{L1}, stated and proven
in Section~\ref{addlemmas}.

\begin{lemma}
\label{lem1} For each integer $N_d \geq 130$ there exists a
function $f: \{0,1\}^{N_d} \to \{0,1\}$ such that, for any given
$(5 N_d+1)$-partite non-signaling distribution $P(\a_1, \ldots
\a_{N_d} ,e|\x_1, \ldots \x_{N_d}, z) =P(b,e|y,z)$, the random
variable $k= f({\rm maj} (\a_1), \ldots {\rm maj} (\a_{N_d}))$
satisfies
\begin{equation}
\label{l2}
    \sum_k \max_z \sum_e \Big| P(k,e|y, z) - \frac1 2 P(e|y,z) \Big|
\ \leq\
    6 \sqrt{N_d} \left(\alpha C + \beta I \right)^{\otimes N_d} \cdot P(B|Y)
\end{equation}
for all inputs $y=(\x_1, \ldots \x_{N_d})\in {\cal X}^{N_d}$, and
where $\alpha$ and $\beta$ are real numbers such that $0< \alpha
<1 <\beta$.
\end{lemma}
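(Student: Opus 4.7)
The plan is to follow the deterministic privacy-amplification template of [PA], adapted to the five-party Mermin setting. The starting point is a convex decomposition, for each quintuplet $i$, of the conditional distribution
\begin{equation*}
P_i(A|X, e, z, \text{rest}) = (1-\lambda_i)\, P_i^{\text{ideal}}(A|X) + \lambda_i\, Q_i(A|X),
\end{equation*}
where $P_i^{\text{ideal}}$ is non-signaling and saturates $I\cdot P_i^{\text{ideal}} = 0$, $Q_i$ is another non-signaling distribution, and the mixing weight $\lambda_i$ is bounded by a constant multiple of $I\cdot P_i$. Tensoring over the $N_d$ quintuplets decomposes $P(B|Y,e,z)$ into a ``good'' term (a product of ideal Mermin-violating distributions) plus corrections whose cumulative weight is controlled by $I^{\otimes N_d}$ acting on $P(B|Y)$. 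Crucially, by no-signaling, conditioning on $e$, $z$ and on the other quintuplets does not invalidate the bound of Theorem \ref{Theorem0}, so on the good term each $\text{maj}(\a_i)$ lies in $[1/4, 3/4]$ and the $N_d$ bits $\text{maj}(\a_1), \ldots, \text{maj}(\a_{N_d})$ form a $(1/4)$-Santha--Vazirani source conditioned on Eve's side information.

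Next I would invoke the two auxiliary lemmas stated in the next subsection to supply a function $f: \{0,1\}^{N_d} \to \{0,1\}$ whose output bias on a $(1/4)$-SV source decays as $\alpha^{N_d}$ for some explicit $0<\alpha<1$. The Santha--Vazirani no-go theorem rules out deterministic extraction of a perfectly unbiased bit from a classical SV source, but it does not forbid bias tending to zero exponentially in the length of the source; this is exactly the regime needed here, and the threshold $N_d\geq 130$ is presumably the point at which the explicit constants in the extractor analysis become effective. The non-signaling Bell certification is what makes this step possible at all: it guarantees the SV structure even after conditioning on Eve's outcome, so no classical SV extraction is actually being attempted against a general classical adversary.

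Assembling the two pieces, the good part contributes a bias of at most $\alpha^{N_d}$, which I rewrite as $\alpha^{N_d}\, C^{\otimes N_d}\cdot P(B|Y)$ using the identity $C\cdot P = 1$; the bad part contributes a correction whose leading behaviour is a constant times $\beta^{N_d}\, I^{\otimes N_d}\cdot P(B|Y)$, together with cross terms coming from tensor factors of mixed good/bad type. Collecting everything into the single expression $(\alpha C + \beta I)^{\otimes N_d}\cdot P(B|Y)$ is the cleanest way to absorb the binomial combinatorics, and the prefactor $6\sqrt{N_d}$ in \eqref{l2}, along with the summation over $k$ and maximisation over $z$, would then come from converting an $L^2$-type extractor inequality into the $L^1$ variational distance via Cauchy--Schwarz. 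The main obstacle I anticipate is the extractor construction itself: producing $f$ and proving its exponential bias bound against a no-signaling adversary, with the local ``bit'' being the majority of three outputs rather than a single measurement outcome, demands a genuine extension of the techniques of [PA] and is where the bulk of the technical work will lie; once that estimate is in hand, the rest of the argument is essentially bookkeeping.
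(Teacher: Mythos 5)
Your proposal takes a genuinely different route from the paper, and unfortunately it has two serious problems as stated.

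First, the claim that a deterministic function can drive the output bias of a classical $(1/4)$-Santha--Vazirani source to zero exponentially in the source length is false: the Santha--Vazirani theorem says precisely that for \emph{every} deterministic $f:\{0,1\}^{N_d}\to\{0,1\}$ there is an $\epsilon$-SV source on which the output bit has bias at least $1/2-\epsilon$, independently of $N_d$. So the ``invoke an extractor on the good part'' step cannot be carried out as written; there is no classical extractor lemma to invoke, and you even acknowledge this in your last sentence of that paragraph, which is in direct tension with the extractor claim two sentences earlier. What actually makes the paper's argument work is not that the $\text{maj}(\a_i)$'s form an SV source, but that the no-signaling constraint lets one replace the ``measurement'' vectors $M^{\x_0}_w$ by $\Gamma^{\x_0}_w=M^{\x_0}_w+\Lambda^{\x_0}_w$ with $\Lambda^{\x_0}_w$ orthogonal to the no-signaling subspace, chosen by a linear program (Lemma~\ref{num}) so that the component-wise magnitudes become small. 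The existence of a suitable $f$ is then established by a probabilistic (Chernoff-type) argument over random functions (Lemma~\ref{L1}), which has nothing to do with SV-source extraction. The $3\sqrt{N_d}$ comes out of the exponential Chebyshev bound there, and the factor $6$ is just $\sum_k 3$.

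Second, the convex decomposition $P_i(A|X,e,z,\text{rest})=(1-\lambda_i)P_i^{\mathrm{ideal}}+\lambda_i Q_i$ per quintuplet, followed by ``tensoring over the $N_d$ quintuplets,'' is not a valid manipulation of a general $(5N_d+1)$-partite no-signaling distribution: the joint $P(B|Y,e,z)$ is not a tensor product of per-quintuplet marginals, and conditioning one quintuplet on the others' outcomes changes the marginal, so there is no guarantee that the ``good'' tensor factors still satisfy $I\cdot P_i^{\mathrm{ideal}}=0$ or that the correction weights factor multiplicatively. The paper sidesteps this entirely by never decomposing $P$: all manipulations are linear operations on the vector $P(B|Y,e,z)$, with $\Lambda$-vectors chosen orthogonal to the full no-signaling subspace so that the scalar products are preserved exactly, and positivity of the components of $P$ is used only to take absolute values inside the dot product. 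Your final assembly step does land on the correct shape $(\alpha C+\beta I)^{\otimes N_d}\cdot P(B|Y)$, and your intuition that Bell certification (not classical SV extraction) is what saves the argument is exactly right, but the mechanism you propose to get there does not work.
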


\begin{proof}[{\bf Proof of Lemma~\ref{lem1}}]
For any $\x_0 \in {\cal X}$ let $M_w^{\x_0}$ be the vector with
components $M_w^{\x_0} (\a,\x)= \delta_{{\rm maj} (\a)}^w
\delta_{\x}^{\x_0}$. The probability of getting ${\rm maj} ({\bf
a}) =w$ when using $\x_0$ as input can be written as $P(w|\x_0)=
M_w^{\x_0} \cdot P({\bf A|X})$. Note that this probability can
also be written as $P(w|\x_0)= \Gamma_w^{\x_0} \cdot P({\bf
A|X})$, where $\Gamma_w^{\x_0} = M_w^{\x_0} +\Lambda_w^{\x_0}$ and
$\Lambda_w^{\x_0}$ is any vector orthogonal to the no-signaling
subspace, that is, such that $\Lambda_w^{\x_0} \cdot P({\bf A}|
{\bf X}) = 0$ for all no-signaling distribution $P({\bf A}| {\bf
X})$. We can then write the left-hand side of \eqref{l2} as
\begin{eqnarray}
\nonumber &&
    \sum_k \max_z \sum_e \left| P(k,e|y, z) - \frac 1 2 P(e|y, z) \right|
\\
\nonumber &=&
    \sum_k \max_z \sum_e P(e|y, z) \left| \sum_\w \left( \delta_{f(\w)}^k - \frac 1 2 \right)
    P(\w |y, e,z)  \right|\\
\label{eq17}
 &=&\sum_k \max_z \sum_e P(e|z) \left| \sum_\w \left( \delta_{f(\w)}^k - \frac 1 2 \right)
    \left( \bigotimes_{i=1}^{N_d} \Gamma_{w_i}^{\x_i} \right) \cdot P(B|Y, e,z)  \right|,
\end{eqnarray}
where in the last equality we have used no-signaling through
$P(e|y,z)= P(e|z)$ and the fact that the probability of obtaining
the string of majorities ${\bf w}$ when inputting $y=(\x_1, \ldots
\x_{N_d} ) \in {\cal X}^{N_d}$ can be written as
\begin{eqnarray}\label{e5}
    &&P(\w|y)
=
    \left( \bigotimes_{i=1}^{N_d} \Gamma_{w_i}^{\x_i} \right) \cdot P(B|Y).
%\\ &=&
%   \sum_{\a_1, \ldots \a_{N_d}} \sum_{\x'_1 , \ldots \x_{N_d}'} \left( \prod_{i=1}^{N_d} \Gamma_{w_i}^{\x_i} (\a_i |\x'_i) \right) P( \a_1, \ldots \a_{N_d} | \x'_1, \ldots \x'_{N_d})\ .
\end{eqnarray}

In what follows, the absolute value of vectors is understood to be
component-wise. Bound~\eqref{eq17} can be rewritten as
\begin{eqnarray}
\nonumber &&
    \sum_k \max_z \sum_e \left| P(k,e|y, z) - \frac 1 2 P(e|y, z) \right|
\\
\nonumber &\leq&
    \sum_k \max_z \sum_e P(e|z) \left| \sum_\w \left( \delta_{f(\w)}^k - \frac 1 2 \right)
    \bigotimes_{i=1}^{N_d} \Gamma_{w_i}^{\x_i} \right| \cdot P(B|Y, e,z)
\\  \nonumber &=&
    \sum_k \max_z \left| \sum_\w \left( \delta_{f(\w)}^k - \frac 1 2 \right)
    \bigotimes_{i=1}^{N_d} \Gamma_{w_i}^{\x_i} \right| \cdot
    \left( \sum_e P(e|z) P(B|Y, e,z) \right)
\\ \label{ee7} &=&
    \sum_k \left| \sum_\w \left( \delta_{f(\w)}^k - \frac 1 2 \right)
    \bigotimes_{i=1}^{N_d} \Gamma_{w_i}^{\x_i} \right| \cdot P(B|Y) ,
\end{eqnarray}
where the inequality follows from the fact that all the components
of the vector $P(B|Y,e,z)$ are positive and no-signalling has been
used again through $P(B|Y,z)= P(B|Y)$ in the last equality. The
bound applies to any function $f$ and holds for any choice of
vectors $\Lambda_w^{\x_i}$ in $\Gamma_w^{\x_i}$. In what follows,
we compute this bound for a specific choice of these vectors and
function $f$.

Take $\Lambda_w^{\x_i}$ to be equal to the vectors
$\Lambda_w^{\x_0}$ in Lemma~\ref{num}. These vectors then satisfy
the bounds~(\ref{c1}) and~(\ref{l1}) in the same Lemma. Take $f$
to be equal to the function whose existence is proven in
Lemma~\ref{L1}. Note that the conditions needed for this  Lemma to
apply are satisfied because of bound~(\ref{c1}) in
Lemma~\ref{num}, and because the free parameter $N_d \geq 130$
satisfies $ \left( 3 \sqrt{N_d} \right)^{-1/N_d} \geq \gamma =
0.9732$. With this choice of $f$ and $\Lambda_w^{\x_i}$,
bound~(\ref{ee7}) becomes
\begin{eqnarray}
\nonumber &&
    \sum_k \max_z \sum_e \left| P(k,e|y, z) - \frac 1 2 P(e|y, z) \right|
\\ \nonumber &\leq&
    \sum_k 3 \sqrt{N_d} \left(\bigotimes_{i=1}^{N_d} \Omega^{\x_i} \right) \cdot P(B|Y)
\\ &\leq&
    6 \sqrt{N_d} \left(\alpha C + \beta I \right)^{\otimes N_d} \cdot P(B|Y)\ ,
\end{eqnarray}
where we have used $\Omega^{\x_i} = \sqrt{(\Gamma_0^{\x_i})^2 +
(\Gamma_1^{\x_i})^2}$, $\sum_k 3 =6$, bound~(\ref{c1}) in
Lemma~\ref{num} and bound~(\ref{l1}) in Lemma~\ref{L1}.
\end{proof}

%%%%%%%%%%%%%%%%%%%%%%%%%%%%%%%%%%%%%%%%%%%%%%%%%%%%%%
\subsection{Statement and proof of Lemma~\ref{lemmaestimation}}
\label{sublemma2}

In this section we prove Lemma~\ref{lemmaestimation}. This Lemma
bounds the Bell violation in the distillation block in terms of
the probability of not aborting the protocol in step 4 and the
number and size of the blocks, $N_{b}$ and $N_d$.

\begin{lemma}
\label{lemmaestimation} Let $P(b_1, \ldots b_{N_b} | y_1, \ldots
y_{N_b})$ be a $(5 N_d N_b)$-partite no-signaling distribution,
$y_1, \ldots y_{N_b}$ and $l$
%follow~(\ref{py}) and~(\ref{pl})
the variables generated in steps 2 and 3 of the protocol,
respectively, and $\alpha$ and $\beta$ real numbers such that $0<
\alpha <1 <\beta$; then
\begin{equation}\label{omega1}
    \left(\alpha C + \beta I \right)^{\otimes N_d} \cdot
    P(\tilde B | \tilde Y, g=1)
\ \leq\
    \alpha^{N_d} +
    \frac {2\, N_b^{\log_2(1-\epsilon)}} {P(g=1) }
    \left( 32 \beta \epsilon^{-5}  \right)^{N_d}
    \ .
\end{equation}
\end{lemma}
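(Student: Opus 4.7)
The plan is to combine an algebraic expansion of $(\alpha C + \beta I)^{\otimes N_d}$ with a probabilistic argument exploiting the $\epsilon$-source constraint \eqref{esourcen} on the bits generating the distillation-block index $l$. First I would multiply through by $P(g=1)$ and analyze the joint quantity $(\alpha C + \beta I)^{\otimes N_d} \cdot P(\tilde B, g=1|\tilde Y)$. Distributing the tensor product gives $\sum_{S \subseteq \{1,\ldots,N_d\}} \alpha^{N_d - |S|} \beta^{|S|}\, I^{\otimes |S|} \otimes C^{\otimes (N_d - |S|)}$, and because $C(\a,\x)=2^{-5}$ is the uniform vector, no-signalling lets each factor of $C$ be marginalized out trivially, reducing each summand to $\alpha^{N_d - |S|} \beta^{|S|} I^{\otimes |S|} \cdot P_S(\tilde B_S, g=1|\tilde Y_S)$, where $P_S$ is the marginal on the quintuplets indexed by $S$. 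A direct computation shows that the $S = \emptyset$ term equals $\alpha^{N_d} P(g=1)$, which after dividing by $P(g=1)$ yields the first term $\alpha^{N_d}$ of the claim.

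For the non-empty-$S$ contributions I would expand the abort-free event as a disjoint union $\{g=1\} = \bigsqcup_{\ell=1}^{N_b} \{l=\ell,\ r[b_j, y_j]=1 \text{ for all } j \neq \ell\}$ and pass the sum over $\ell$ outside. A non-trivial value of $I^{\otimes |S|} \cdot P_S$ on the distillation block witnesses a deviation from maximal Mermin violation; by block-symmetry of the device distribution and no-signalling, the same deviation would appear with comparable probability on some test block $j \neq \ell$, forcing $r[b_j, y_j]=0$ and contradicting $g=1$---unless Eve biases the source so that $l$ is steered to the anomalous block. The $\epsilon$-source bound applied to the $\log_2 N_b$ bits generating $l$ caps this bias at $P(l=\ell|\text{past}) \leq (1-\epsilon)^{\log_2 N_b} = N_b^{\log_2(1-\epsilon)}$, while a second application of the $\epsilon$-source bound on each five-bit quintuplet input converts a bare counting sum (bounded by $2^5 = 32$) into a probability-weighted sum at a cost of $\epsilon^{-5}$ per quintuplet, producing the per-quintuplet factor $32\beta\epsilon^{-5}$. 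Summing the combinatorial expansion over $S$ (the constant $2$ absorbing the sub-leading $|S|<N_d$ corrections) then yields the second term $2 N_b^{\log_2(1-\epsilon)}(32\beta\epsilon^{-5})^{N_d}/P(g=1)$.

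The principal obstacle I foresee is the clean interleaving of the two $\epsilon$-source applications---one on $l$ and one on the inputs $\x$---with the no-signalling marginalization and the conditioning on $g=1$. One must verify that no-signalling in $P(b_1,\ldots,b_{N_b}|y_1,\ldots,y_{N_b})$ genuinely decouples the distillation-block marginal $P_S$ from the test-block variables, so that Eve's only handle on the distillation-block violation is through her correlation with the source, whose strength is bounded by $N_b^{\log_2(1-\epsilon)}$. A secondary subtlety is that $\tilde Y$ itself depends on $l$: its role as an ``input'' in the scalar product must be reconciled with the disjoint decomposition $\bigsqcup_\ell\{l=\ell\}$ used to invoke the source bound, presumably by rewriting $P(\tilde B,g=1|\tilde Y)$ as a sum $\sum_\ell P(B_\ell, g=1, l=\ell|Y_\ell)/P(l=\ell|\tilde Y)$ and carefully accounting for all source probabilities produced by $\mathcal{S}$.
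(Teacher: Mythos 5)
Your algebraic opening is fine --- expanding $(\alpha C+\beta I)^{\otimes N_d}$ over subsets $S$, marginalizing the $C$ factors by no-signalling, and extracting $\alpha^{N_d}$ from $S=\emptyset$ is equivalent to the paper's pointwise inequality $\prod_i[\alpha 2^{-5}+\beta I_i]\leq(\alpha 2^{-5})^{N_d}+(2\beta)^{N_d}\delta^0_{r[b,y]}$, which bundles all non-empty $S$ into a single indicator that the block is ``not right.'' You also correctly recall the $\epsilon$-source bound $P(l=\ell)\leq(1-\epsilon)^{\log_2 N_b}=N_b^{\log_2(1-\epsilon)}$.

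The genuine gap is in the middle step. You invoke ``block-symmetry of the device distribution'' to argue that a deviation in the distillation block must reappear on a test block. There is no such symmetry to appeal to: the devices are adversarially prepared and Eve can, for example, make exactly one block defective and all others perfect. The paper needs no symmetry at all. Its argument is a direct probability bookkeeping: the event $\{\tilde r=0,\,g=1\}$ (distillation block wrong, all test blocks right) can be decomposed over the chosen index $l$, and for each candidate culprit block $l'$ the event $\{$only $l'$ wrong$\}$ contributes to $\{g=0\}$ whenever $l\neq l'$. Since the probability of that device event does not depend on which $l$ was drawn, one gets $1\geq P(g=0)\geq\sum_{l'}[1-P(l')]\,P(\text{only }l'\text{ wrong})$, and then the $\epsilon$-source bound turns the $[1-P(l')]$ into a lower bound $\frac{1}{2}N_b^{\log_2\frac{1}{1-\epsilon}}P(l')$, closing the loop to $P(\tilde r=0,g=1)\leq 2N_b^{\log_2(1-\epsilon)}$. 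Your proposal gestures at ``Eve steers $l$ to the anomalous block'' which is the right intuition, but you never set up the inequality $1\geq P(g=0)$ against a sum weighted by $[1-P(l')]$, which is what actually makes the argument symmetry-free.

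A second, smaller issue is your account of the $32\beta\epsilon^{-5}$ factor. You attribute the $32$ to the $2^5$ input strings per quintuplet, but the decomposition is really $32\beta=2\beta\times 16$: the $2\beta$ comes from bounding $\beta^{N_d}(2^{N_d}-1)\leq(2\beta)^{N_d}$ in the $S$-expansion, and the $16$ from the lower bound $P(y\,|\,g=1)\geq\bigl(\epsilon^5/(16(1-\epsilon)^5)\bigr)^{N_d}$, where $16$ is the number of allowed quintuplet inputs in $\mathcal X$ out of $32$, needed to pass from $P(r=0,y\,|\,g=1)$ to $P(r=0\,|\,y,g=1)$. The final paragraph's proposed rewriting as $\sum_\ell P(B_\ell,g=1,l=\ell\,|\,Y_\ell)/P(l=\ell\,|\,\tilde Y)$ is not dimensionally coherent and would not survive closer inspection; the paper avoids this altogether by working with $P(\cdot\,|\,g=1)$ throughout and inserting the input lower bound only at the very end.
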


\begin{proof}[{\bf Proof of Lemma~\ref{lemmaestimation}}]
According to definition~(\ref{r}) we have $I(\a_i, \x_i) \leq
\delta_{r[b,y]}^0$ for all values of $b= (\a_1, \ldots \a_{N_d})$
and $y= (\x_1, \ldots \x_{N_d})$. This also implies $I(\a_i, \x_i)
I(\a_j, \x_j) \leq \delta_{r[b,y]}^0$ and so on. Due to the
property $0<\alpha<1<\beta$, one has that $(\alpha\, 2^{-5})^{N_d
-i} \beta^i \leq \beta^{N_d}$ for any $i=1, \ldots N_d$. All this
in turn implies
\begin{eqnarray}
\nonumber &&
    \prod_{i=1}^{N_d} \left[ \alpha\, 2^{-5} + \beta I_i \right]
\\ \nonumber &=&
    \left( \alpha\, 2^{-5} \right)^{N_d} +
    \left( \alpha\, 2^{-5} \right)^{N_d-1} \beta \sum_i  I_i +
    \left( \alpha\, 2^{-5} \right)^{N_d-2} \beta^2
    \sum_{i\neq j} I_i I_j + \cdots
\\ \nonumber &\leq&
    \left( \alpha\, 2^{-5} \right)^{N_d} +
    \beta^{N_d} \left( \sum_i  I_i + \sum_{i\neq j} I_i I_j + \cdots \right)
\\ \nonumber &\leq&
    \left( \alpha\, 2^{-5} \right)^{N_d} +
    \beta^{N_d} \left( \sum_i  \delta_{r[b,y]}^0 + \sum_{i\neq j} \delta_{r[b,y]}^0 + \cdots \right)
\\ \label{sole1} &\leq&
    \left( \alpha\, 2^{-5} \right)^{N_d} +
    \beta^{N_d} \left( 2^{N_d} -1\right) \delta_{r[b,y]}^0
\ \leq\
    \left( \alpha\, 2^{-5} \right)^{N_d} +
    \left(\beta \, 2\right)^{N_d} \delta_{r[b,y]}^0\ ,
\end{eqnarray}
where $I_i= I(\a_i, \x_i)$. This implies that
\begin{eqnarray}
\nonumber &&
    \left(\alpha C + \beta I \right)^{\otimes N_d} \cdot
    P( B | Y, g=1)
\\ \nonumber &=&
    \sum_{\a_1, \ldots \a_{N_d}} \sum_{\x_1 , \ldots \x_{N_d}}
    \prod_{i=1}^{N_d} \left[ \alpha\, 2^{-5} + \beta I(\a_i, \x_i) \right]
    P( \a_1, \ldots \a_{N_d} | \x_1, \ldots \x_{N_d}, g=1)
\\ \nonumber &\leq&
%   \sum_{y \in {\cal X}^{N_d}}
    \sum_{b,y} \left[ \left(\alpha\, 2^{-5} \right)^{N_d} +
    (2\beta)^{N_d} \delta_{r[b, y]}^0 \right] P(b|y, g=1)
\\ \nonumber &=&
    \alpha^{N_d} \sum_y 2^{-5 N_d} + (2\beta)^{N_d}
    \sum_{y} P(r =0|y, g=1)
\\ \nonumber &=&
    \alpha^{N_d} + (2\beta)^{N_d}
    \sum_{y} P(r =0|y, g=1)
\\ \label{e34} &=&
    \alpha^{N_d} + (2\beta)^{N_d}
    \sum_{y} \frac{P(r=0, y| g=1)}{P(y| g=1)}  \ .
\end{eqnarray}
We can now bound $P(y| g=1)$ taking into account that $y$ denotes
a $5N_d$-bit string generated by the $\epsilon$-source
$\mathcal{S}$ that remains after step 2 in the protocol. Note that
only half of the 32 possible 5-bit inputs $\x$ generated by the
source belong to ${\cal X}$ and remain after step 2. Thus,
$P((\x_1,\ldots,\x_{N_d})\in {\cal X}^{N_d}|g=1) \leq 16^{N_d}
(1-\epsilon)^{5 N_d}$, where we used~\eqref{esourcen}. This,
together with $P((\x_1,\ldots,\x_{N_d})|g=1)\geq \epsilon^{5 N_d}$
implies that
\begin{equation} \label{py}
    P(y|g=1)
    \geq \left( \frac{\epsilon^5}{16 (1-\epsilon)^5}
    \right)^{N_d}.
\end{equation}
Substituting this bound in~(\ref{e34}), and summing over $y$,
gives
\begin{equation}\label{e20}
    \left(\alpha C + \beta I \right)^{\otimes N_d} \cdot
    P( B | Y, g=1)
\ \leq\
    \alpha^{N_d} + (2\beta)^{N_d}\left( \frac{16 (1-\epsilon)^5}{\epsilon^5} \right)^{N_d} P(r =0| g=1)  \ .
\end{equation}
In what follows we use the notation
\[
    P(1_1, 0_2, 1_3, 1_4, \ldots) = P(r[b_1,y_1] =1, r[b_2, y_2] =0, r[b_3, y_3] =1, r[b_4, y_4] =1, \ldots)\ .
\]
According to~(\ref{g}), the protocol aborts ($g=0$) if there is at
least a \lq\lq{}not right\rq\rq{} block ($r[b_j, y_j]=0$ for some
$j\neq l$). While abortion also happens if there are more than one
\lq\lq{}not right\rq\rq{} block, in what follows we lower-bound
$P(g=0)$ by the probability that there is only one \lq\lq{}not
right\rq\rq{} block:
\begin{eqnarray}\nonumber
    1 &\geq& P(g=0)
\\ \nonumber &\geq &
    \sum_{l=1}^{N_b} P(l) \sum_{l\rq{}=1,\, l\rq{} \neq l}^{N_b}
    P(1_1,\ldots 1_{l-1}, 1_{l+1},\ldots 1_{l'-1}, 0_{l'}, 1_{l'+1}, \ldots 1_{N_b})
\\ \nonumber &\geq &
    \sum_{l} P(l) \sum_{l\rq{} \neq l}
    P(1_1,\ldots 1_{l-1}, 1_l, 1_{l+1},\ldots 1_{l'-1}, 0_{l'}, 1_{l'+1}, \ldots 1_{N_b})
\\ \nonumber &= &
    \sum_{l\rq{}} \left[ \mbox{$\sum_{l \neq l\rq{}}$} P(l) \right]
    P(1_1,\ldots 1_{l-1}, 1_l, 1_{l+1},\ldots 1_{l'-1}, 0_{l'}, 1_{l'+1}, \ldots 1_{N_b})
\\ \label{e37} &=&
    \sum_{l'} [1-P(l')]\,
    P(1_1,\ldots 1_{l'-1}, 0_{l'}, 1_{l'+1}, \ldots 1_{N_b}),
\end{eqnarray}
where, when performing the sum over $l$, we have used that
$P(1_1,\ldots 1_{l-1}, 1_l, 1_{l+1},\ldots 1_{l'-1}, 0_{l'},
1_{l'+1}, \ldots 1_{N_b})\equiv P(1_1,\ldots 1_{l'-1}, 0_{l'},
1_{l'+1}, \ldots 1_{N_b})$ does not depend on $l$.
Bound~(\ref{esourcen}) implies
\begin{equation}\label{1-p/p}
    \frac{1-P(l)}{P(l)} \ \geq\  \frac{1- (1-\epsilon)^{\log_2 N_b}}{(1-\epsilon)^{\log_2 N_b}} \ =\  N_b^{\log_2 \frac{1}{1-\epsilon}} -1
    \ \geq\  \frac {N_b^{\log_2 \frac{1}{1-\epsilon}}}{2}\ ,
\end{equation}
where the last inequality holds for sufficiently large $N_b$.
Using this and~(\ref{e37}), we obtain
\begin{eqnarray}\nonumber
    1 &\geq &
    \frac 1 2 \sum_{l'} N_b^{\log_2 \frac{1}{1-\epsilon}}\, P(l')\,
    P(1_1,\ldots 1_{l'-1}, 0_{l'}, 1_{l'+1}, \ldots 1_{N_b})
\\ \label{e22}  &\geq &
    \frac 1 2 \, N_b^{\log_2 \frac{1}{1-\epsilon}}\, P(\tilde r= 0, g=1)\ ,
\end{eqnarray}
where $\tilde r= r[b_l, y_l]$. This together with~(\ref{e20}) implies
\begin{eqnarray}\nonumber
    \left(\alpha C + \beta I \right)^{\otimes N_d} \cdot
    P(\tilde B| \tilde Y ,g=1)
&\leq&
    \alpha^{N_d} + (2\beta)^{N_d}
    \left( \frac{16 (1-\epsilon)^5}{\epsilon^5} \right)^{N_d} P(\tilde{r} =0| g=1)
\\ \label{e40} &\leq &
    \alpha^{N_d} + \frac {2} {P(g=1)}
    \left( \frac{32\beta (1-\epsilon)^5}{\epsilon^5} \right)^{N_d}
    N_b^{\log_2 (1-\epsilon)}\ ,
\end{eqnarray}
where, in the second inequality, Bayes rule was again invoked.
Inequality \eqref{e40}, in turn, implies~(\ref{omega1}).
\end{proof}

\subsection{Statement and proof of the additional Lemmas}
\label{addlemmas}

\begin{lemma}\label{num}
For each $\x_0 \in {\cal X}$ there are three vectors
$\Lambda_0^{\x_0}, \Lambda_1^{\x_0}, \Lambda_2^{\x_0}$ orthogonal
to the non-signaling subspace such that for all $w \in \{0,1\}$
and $\a, \x \in \{0,1\}^5$ they satisfy
\begin{equation}\label{c1}
    \sqrt{ \left[ M_0^{\x_0} (\a, \x) + \Lambda_0^{\x_0} (\a, \x) \right]^2 +
    \left[ M_1^{\x_0} (\a, \x) + \Lambda_1^{\x_0} (\a, \x) \right]^2 }
    \leq
    \alpha C (\a, \x) + \beta I(\a, \x) + \Lambda_2^{\x_0} (\a, \x)
\end{equation}
and
\begin{equation}\label{gm}
    \left| M_w^{\x_0} (\a, \x) + \Lambda_w^{\x_0} (\a, \x) \right|
    \leq \gamma \sqrt{
    \left[ M_0^{\x_0} (\a, \x) + \Lambda_0^{\x_0} (\a, \x) \right]^2 +
    \left[ M_1^{\x_0} (\a, \x) + \Lambda_1^{\x_0} (\a, \x) \right]^2
    }
\end{equation}
where $\alpha= 0.8842$, $\beta= 1.260$ and $\gamma= 0.9732$.
\end{lemma}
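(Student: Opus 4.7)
The plan is to recognize Lemma~\ref{num} as a finite-dimensional convex feasibility problem, recast it as a second-order cone program (SOCP), and exhibit an explicit solution.

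First, I would make the no-signaling complement $\mathcal{N}^\perp$ explicit. A vector $\Lambda$ belongs to $\mathcal{N}^\perp$ iff $\Lambda \cdot P = 0$ for every non-signaling conditional distribution $P(\a|\x)$. The defining linear constraints on $P$ are the no-signaling equalities $\sum_{a_i} P(\a|\x) = \sum_{a_i} P(\a|\x')$ for each party $i$ and each pair of inputs $\x,\x'$ differing only in the $i$-th coordinate. Dualizing these equations gives an explicit finite spanning family for $\mathcal{N}^\perp$, and each $\Lambda_w^{\x_0}$ ($w=0,1,2$) is then parameterized by a vector of real coefficients in this basis.

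Second, I would observe that both desired inequalities decouple over $(\a,\x)$. With $\Gamma_w := M_w^{\x_0} + \Lambda_w$ affine in the decision variables, inequality~(\ref{c1}) is the second-order cone constraint
\[
\left\|\bigl(\Gamma_0(\a,\x),\Gamma_1(\a,\x)\bigr)\right\|_2 \le \alpha\, 2^{-5} + \beta\, I(\a,\x) + \Lambda_2(\a,\x),
\]
and inequality~(\ref{gm}) is the SOC constraint $|\Gamma_w(\a,\x)| \le \gamma \|(\Gamma_0(\a,\x),\Gamma_1(\a,\x))\|_2$. Together with the linear parameterization of $\Lambda_w^{\x_0}$, the entire statement is a finite-dimensional SOCP with $32 \times 32 = 1024$ conic constraints, which a standard solver (e.g. CVX/SeDuMi/Mosek) handles in milliseconds.

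Third, I would exploit symmetry to cut down the work. The set $\cal X$ and the Mermin functional $I$ are invariant under the full permutation group $S_5$ acting on the five parties, while the majority-vote function $\rm maj$ is invariant only under the stabilizer $S_3\times S_2$ of the partition $\{1,2,3\}\mid\{4,5\}$. The orbits of $\cal X$ under this stabilizer thus reduce the sixteen values of $\x_0$ to a handful of canonical representatives; for each I would solve the SOCP, read off explicit $\Lambda_0^{\x_0},\Lambda_1^{\x_0},\Lambda_2^{\x_0}$, and obtain the remaining cases by applying the corresponding permutation.

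The main obstacle is the numerical optimization itself: the constants $\alpha=0.8842$, $\beta=1.260$, $\gamma=0.9732$ must be simultaneously feasible for every $\x_0\in\cal X$, and the coupling of (c1) and (gm) through the common $\Gamma_w$ means the three $\Lambda_w^{\x_0}$ cannot be optimized independently. Once the solver returns candidate vectors, however, the proof reduces to a mechanical pointwise verification of the two inequalities, which can be carried out exactly by tabulating $1024$ numerical values for each $\x_0$. The choice of the stated constants (slightly loose with respect to the SOCP optimum) ensures robustness of this final check.
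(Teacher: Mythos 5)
Your overall framing — finite-dimensional feasibility problem, decision variables in the non-signalling complement, constraints decoupled over $(\a,\x)$, reduce to a solver and then certify — is the right shape and matches the paper's intent. But there is a genuine gap: constraint~(\ref{gm}) is \emph{not} a second-order cone constraint, and the feasibility problem you describe is not an SOCP. Squaring both sides, (\ref{gm}) for $w=0$ reads $\Gamma_0^2 \le \gamma^2(\Gamma_0^2 + \Gamma_1^2)$, i.e.\ $|\Gamma_0| \le \tfrac{\gamma}{\sqrt{1-\gamma^2}}\,|\Gamma_1|$. The set of pairs $(\Gamma_0,\Gamma_1)$ satisfying this is a ``bowtie'' — the union of two opposite wedges touching at the origin — which is nonconvex. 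A constraint of the form $\|\cdot\| \le \gamma\|\cdot\|$ has a norm on \emph{both} sides, whereas an SOC constraint requires the right-hand side to be affine in the decision variables; conic solvers will reject it or, worse, silently solve a different (relaxed) problem. So the claim that CVX/SeDuMi/Mosek ``handles it in milliseconds'' does not go through as stated.

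The paper sidesteps this in two steps you would need to replicate. First, it replaces the circle in (\ref{c1}) by an inscribed octagon, turning that constraint into eight linear inequalities (\ref{c2}); this is a strengthening, so any feasible point of the LP is feasible for (\ref{c1}). Second, and crucially, it handles the nonconvexity of (\ref{gm}) by a sign-branching trick: an initial LP (minimizing $\alpha$ subject only to (\ref{ns3}), (\ref{c2})) is solved to \emph{guess} the signs $\sigma_w(\a,\x) = \mathrm{sign}[M_w^{\x_0}+\Lambda_w^{\x_0}]$; those signs are then frozen, which selects one wedge of each bowtie and makes (\ref{gm}) equivalent to the two linear constraints (\ref{ee132})--(\ref{ee162}). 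A second LP over the now fully linear constraint set produces the certificate. Any feasible point of this LP is feasible for the original Lemma, so the final numerical check you describe is valid — but only after the sign-fixing step restores convexity. Your symmetry observation (the stabilizer $S_3\times S_2$ of $\{1,2,3\}$ acting on ${\cal X}$) is a nice extra reduction the paper does not use, though the paper notes a posteriori that the optimal $(\alpha,\beta)$ turned out to be $\x_0$-independent, which is consistent with it.
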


\begin{proof}[{\bf Proof of Lemma~\ref{num}}]
The proof of this lemma is numeric but rigorous. It is based on
two linear-programming minimization problems, which are carried
for each value of $\x_0 \in {\cal X}$. We have repeated this
process for different values of $\gamma$, finding that $\gamma=
0.9732$ is roughly the smallest value for which the
linear-programs described below are feasible.

The fact that the vectors $\Lambda_0^{\x_0}, \Lambda_1^{\x_0},
\Lambda_2^{\x_0}$ are orthogonal to the non-signaling subspace can
be written as linear equalities
\begin{equation}\label{ns3}
    D \cdot \Lambda_w^{\x_0} = {\bf 0}
\end{equation}
for $w \in \{0,1,2\}$, where ${\bf 0}$ is the zero vector and $D$
is a matrix whose rows constitute a basis of non-signaling
probability distributions. A geometrical interpretation of
constraint~(\ref{c1}) is that the point in the plane with
coordinates $\left[ M_0^{\x_0} (\a, \x) + \Lambda_0^{\x_0} (\a,
\x) , M_1^{\x_0} (\a, \x) + \Lambda_1^{\x_0} (\a, \x) \right] \in
\mathbb{R}^2$ is inside a circle of radius $\alpha C (\a, \x) +
\beta I(\a, \x) + \Lambda_2^{\x_0} (\a, \x)$ centered at the
origin. All points inside an octagon inscribed in this circle also
satisfy constraint~(\ref{c1}). The points of such an inscribed
octagon are the ones satisfying the following set of linear
constraints:
\begin{eqnarray}\label{c2}
\nonumber &&
    \left[ M_0^{\x_0} (\a, \x) + \Lambda_0^{\x_0} (\a, \x) \right] \eta \cos \theta +
    \left[ M_1^{\x_0} (\a, \x) + \Lambda_1^{\x_0} (\a, \x) \right] \eta \sin \theta
\\ \label{lemma2} &\leq&
    \alpha C (\a, \x) + \beta I(\a, \x) + \Lambda_2^{\x_0} (\a, \x)\ ,
\end{eqnarray}
for all $\theta \in \{\frac{\pi}{8}, \frac{3\pi}{8},
\frac{5\pi}{8}, \frac{7\pi}{8}, \frac{9\pi}{8}, \frac{11\pi}{8},
\frac{13\pi}{8}, \frac{15\pi}{8} \}$, where $\eta =  (\cos
\frac{\pi}{8})^{-1} \approx 1.082$. In other words, the eight
conditions~(\ref{c2}) imply constraint~(\ref{c1}). From now on, we
only consider these eight linear constraints~(\ref{c2}). With a
bit of algebra, one can see that inequality~(\ref{gm}) is
equivalent to the two almost linear inequalities there was an
error in the following equation, as the pre-factor in terms of
$\gamma$ was wrong. Please check what was computed and how it
affects to $\gamma$ and, then, to the value of $N_d$
\begin{equation}\label{ee13}
    \pm \left[ M_w^{\x_0} (\a, \x) + \Lambda_w^{\x_0} (\a, \x) \right]
\ \leq\ \sqrt{\frac{\gamma^2}{1-\gamma^2}}
    \left| M_{\bar w}^{\x_0} (\a, \x) + \Lambda_{\bar w}^{\x_0} (\a, \x) \right|\ ,
\end{equation}
for all $w\in \{0,1\}$, where $\bar w = 1-w$. Clearly, the problem
is not linear because of the absolute values. The computation
described in what follows constitutes a trick to make a good guess
for the signs of the terms in the absolute value of~(\ref{ee13}),
so that the problem can be made linear by adding extra
constraints.

The first computational step consists of a linear-programming
minimization of $\alpha$ subject to the constraints~(\ref{ns3}),
(\ref{c2}), where the minimization is performed over the variables
$\alpha, \beta, \Lambda_0^{\x_0}, \Lambda_1^{\x_0},
\Lambda_2^{\x_0}$. This step serves to guess the signs
\begin{equation}\label{signs}
    \sigma_w (\a, \x) \ =\  \mathrm{sign}\! \left[
    M_w^{\x_0} (\a, \x) + \Lambda_w^{\x_0} (\a, \x) \right]\ ,
\end{equation}
for all $w,\a, \x$, where the value of $\Lambda_w^{\x_0} (\a, \x)$
corresponds to the solution of the above minimization. Once we
have identified all these signs, we can write the
inequalities~(\ref{ee13}) in a linear fashion:
\begin{eqnarray}\label{ee132}
    \sigma_w (\a, \x)
    \left[ M_w^{\x_0} (\a, \x) + \Lambda_w^{\x_0} (\a, \x) \right]
&\geq&  0 \ , \\ \label{ee162}
    \sigma_w (\a, \x)
    \left[ M_w^{\x_0} (\a, \x) + \Lambda_w^{\x_0} (\a, \x) \right]
&\leq& \sqrt{\frac{\gamma^2}{1-\gamma^2}}\,
    \sigma_{\bar w} (\a,\x)
    \left[ M_{\bar w}^{\x_0} (\a, \x) + \Lambda_{\bar w}^{\x_0} (\a, \x) \right]
    \ ,
\end{eqnarray}
for all $w\in \{0,1\}$.

The second computational step consists of a linear-programming
minimization of $\alpha$ subjected to the constraints~(\ref{ns3}),
(\ref{c2}), (\ref{ee132}), (\ref{ee162}), over the variables
$\alpha, \beta, \Lambda_0^{\x_0}, \Lambda_1^{\x_0},
\Lambda_2^{\x_0}$. Clearly, any solution to this problem is also a
solution to the original formulation of the Lemma. The
minimization was performed for any $\x_0 \in {\cal X}$ and the
values of $\alpha, \beta$ turned out to be independent of $\x_0
\in {\cal X}$. These obtained numerical values are the ones
appearing in the formulation of the Lemma.
\end{proof}

Note that Lemma~\ref{num} allows one to bound the predictability
of ${\rm maj} (\a)$ by a linear function of the 5-party Mermin
violation. This can be seen by computing $\Gamma_w^{\x_0} \cdot
P({\bf A|X})$ and applying the bounds in the Lemma. In principle,
one expects this bound to exist, as the predictability is smaller
than one at the point of maximal violation, as proven in
Theorem~\ref{Theorem0}, and equal to one at the point of no
violation. However, we were unable to find it. This is why we had
to resort to the linear optimization technique given above, which
moreover provides the bounds~\eqref{c1} and~\eqref{gm} necessary
for the security proof.

\begin{lemma}\label{L1}
Let $N_d$ be a positive integer and let $\Gamma_w^i (\a, \x)$ be a
given set of real coefficients such that for all $i \in \{1,
\ldots N_d \}$, $w \in \{0,1\}$ and $\a, \x \in \{0,1\}^5$ they
satisfy
\begin{equation}\label{l12}
    \left| \Gamma_w^i (\a, \x) \right|
    \leq \left( 3 \sqrt{N_d} \right)^{-1/N_d} \Omega_i (\a,\x)\ ,
\end{equation}
where $\Omega_i (\a,\x) = \sqrt{\Gamma_0^i (\a, \x)^2 + \Gamma_1^i
(\a, \x)^2}$. There exists a function $f: \{0,1\}^{N_d} \to
\{0,1\}$ such that for each sequence $(\a_1 ,\x_1), \ldots
(\a_{N_d} ,\x_{N_d})$ we have
\begin{equation}\label{l1}
    \left| \sum_\w \left( \delta_{f(\w)}^k - \frac 1 2 \right)
    \prod_{i=1}^{N_d} \Gamma_{w_i}^i (\a_i, \x_i) \right|
\ \leq\
    3 \sqrt{N_d}\, \prod_{i=1}^{N_d}
    \Omega_i (\a_i, \x_i)\ ,
\end{equation}
where the sum runs over all $\w= (w_1, \ldots w_{N_d}) \in
\{0,1\}^{N_d}$.
\end{lemma}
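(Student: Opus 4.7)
My plan is to prove existence of $f$ by the probabilistic method: sample $f\colon\{0,1\}^{N_d}\to\{0,1\}$ by letting each value $f(\w)$ be an independent uniform bit, and show that this random $f$ satisfies the claimed inequality for \emph{every} sequence $(\a_1,\x_1),\ldots,(\a_{N_d},\x_{N_d})$ simultaneously with strictly positive probability. That forces existence of a deterministic $f$ with the required property, which is all the lemma asks for.

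For a fixed sequence, set $t_\w:=\prod_{i=1}^{N_d}\Gamma_{w_i}^i(\a_i,\x_i)$. Then $Y:=\sum_{\w}(\delta_{f(\w)}^k-1/2)\,t_\w$ is a sum of $2^{N_d}$ independent mean-zero random variables, each lying in $[-|t_\w|/2,|t_\w|/2]$. The key algebraic identity, obtained by pushing each sum $\sum_{w_i\in\{0,1\}}$ inside the product, is
\begin{equation*}
\sum_{\w\in\{0,1\}^{N_d}} t_\w^2 \;=\; \prod_{i=1}^{N_d}\bigl[\Gamma_0^i(\a_i,\x_i)^2+\Gamma_1^i(\a_i,\x_i)^2\bigr] \;=\; \prod_{i=1}^{N_d}\Omega_i(\a_i,\x_i)^2.
\end{equation*}
Hoeffding's inequality with $s=3\sqrt{N_d}\prod_i\Omega_i(\a_i,\x_i)$ then yields $\Pr(|Y|>s)\le 2\exp(-18N_d)$, since $2s^2$ divided by the above variance proxy equals $18N_d$ exactly.

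To close I union-bound over sequences. There are at most $2^{10N_d}$ choices, since each $(\a_i,\x_i)\in\{0,1\}^{10}$, and by the sign symmetry $\delta_{f(\w)}^0-1/2=-(\delta_{f(\w)}^1-1/2)$ it suffices to treat a single $k$. The total failure probability is therefore bounded by $2\cdot 2^{10N_d}\cdot\exp(-18N_d)=2\,(2^{10}/e^{18})^{N_d}$, which is strictly less than $1$ for every $N_d\ge 1$ because $2^{10}/e^{18}\approx 1.56\times 10^{-5}$. A deterministic $f$ satisfying the inequality for every sequence therefore exists, proving the lemma.

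The step I expect to be the main obstacle---and the only non-routine ingredient---is the tensor-product factorization of $\sum_\w t_\w^2$ as $\prod_i\Omega_i^2$, which exploits both that $t_\w$ is a product across the $N_d$ coordinates and that each $w_i$ is summed independently; without this collapse the Hoeffding denominator would be uselessly large and the $\sqrt{N_d}$ scaling unreachable. I note that the hypothesis $|\Gamma_w^i|\le(3\sqrt{N_d})^{-1/N_d}\Omega_i$ does not appear to be strictly needed for the Hoeffding argument above; I would expect the authors either to use it to give an explicit combinatorial construction of $f$ in the spirit of the privacy-amplification techniques of their reference \cite{masanes}, or to sharpen the constant via a moment-method refinement that controls individual-coordinate fluctuations rather than invoking a uniform bound on each term.
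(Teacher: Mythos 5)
Your proof is correct, and it follows the same basic template as the paper's---probabilistic method, a tail bound on $\sum_\w F_\w\, t_\w$ for each fixed sequence with $F_\w=\delta_{f(\w)}^0-1/2\in\{\pm 1/2\}$ i.i.d., and a union bound over the at most $4^{5N_d}=2^{10N_d}$ sequences---but you swap out the paper's hand-rolled concentration estimate for a direct appeal to Hoeffding's inequality, and this genuinely changes what hypotheses are needed. The paper derives its tail bound via exponential Markov combined with the crude inequality $e^\eta\le 1+\eta+\eta^2$, which is valid only for $\eta\le 1$; verifying this (their condition~(\ref{tocheck}), namely $|\frac{\nu}{2}\prod_i\Gamma_{w_i}^i|\le 1$) is precisely where hypothesis~\eqref{l12} is used, and it is also what forces the constraint $N_d\ge 130$ in Lemma~\ref{lem1} (so that $(3\sqrt{N_d})^{-1/N_d}\ge\gamma=0.9732$, the constant delivered by Lemma~\ref{num}). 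Your Hoeffding route rests instead on Hoeffding's lemma $\E[e^{\lambda X}]\le e^{\lambda^2(b-a)^2/8}$, which holds unconditionally for bounded mean-zero $X$, so~\eqref{l12} is indeed not needed, as you suspected. Both proofs hinge on the same key collapse $\sum_\w t_\w^2=\prod_i\Omega_i^2$, and your version yields a somewhat sharper per-sequence tail ($2e^{-18N_d}$ versus the paper's $e^{-9N_d}$ per side), though either comfortably beats the $4^{5N_d}$ union-bound count. Two small remarks: sequences with $\prod_i\Omega_i=0$ force $t_\w\equiv 0$ and so satisfy~\eqref{l1} trivially (the paper treats this as a separate case, but your argument handles it automatically); and your observation that~\eqref{l12} is dispensable is an actual simplification of the paper's chain of lemmas, since it removes the only place where bound~(\ref{gm}) of Lemma~\ref{num} and the restriction $N_d\ge 130$ enter.
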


\begin{proof}[{\bf Proof of Lemma~\eqref{L1}}]
First, note that for a sequence $(\a_1 ,\x_1), \ldots (\a_{N_d}
,\x_{N_d})$ for which there is at least one value of $i\in \{1,
\ldots N_d\}$ satisfying $\Gamma_0^i (\a_i, \x_i) = \Gamma_1^i
(\a_i, \x_i) = 0$, both the left-hand side and the right-hand side
of~(\ref{l1}) are equal to zero, hence, inequality~(\ref{l1}) is
satisfied independently of the function $f$. Therefore, in what
follows, we only consider sequences $(\a_1 ,\x_1), \ldots
(\a_{N_d} ,\x_{N_d})$ for which either $\Gamma_0^i (\a_i, \x_i)
\neq 0$ or $\Gamma_1^i (\a_i, \x_i) \neq 0$, for all $i=1, \ldots
N_d$. Or, equivalently, we consider sequences such that
\begin{equation}\label{const4}
    \prod_{i=1}^{N_d} \Omega_i (\a_i, \x_i) >0\ .
\end{equation}

The existence of the function $f$ satisfying~(\ref{l1}) for all
such sequences is shown with a probabilistic argument. We consider
the situation where $f$ is picked from the set of all functions
mapping $\{0,1\}^{N_d}$ to $\{0,1\}$ with uniform probability, and
upper-bound the probability that the chosen function does not
satisfy the constraint~(\ref{l1}) for all $k$ and all sequences
$(\a_1 ,\x_1), \ldots (\a_{N_d} ,\x_{N_d})$
satisfying~(\ref{const4}). This upper bound is shown to be smaller
than one. Therefore there must exist at least one function
satisfying~(\ref{l1}).

For each $\w \in \{0,1\}^{N_d}$ consider the random variable
$F_\w= ( \delta_{f(\w)}^0 - \frac 1 2 ) \in \{\frac 1 2 , - \frac
1 2\}$, where $f$ is picked from the set of all functions mapping
$\{0,1\}^{N_d} \to \{0,1\}$ with uniform distribution. This is
equivalent to saying that the $2^{N_d}$ random variables
$\{F_\w\}_\w$ are independent and identically distributed
according to $\Pr \{F_\w = \pm \frac 1 2 \} = \frac 1 2$. For ease
of notation, let us fix a sequence $(\a_1 ,\x_1), \ldots (\a_{N_d}
,\x_{N_d})$ satisfying~(\ref{const4}) and use the short-hand
notation  $\Gamma_{w_i}^i = \Gamma_{w_i}^i (\a_i ,\x_i)$.

We proceed using the same ideas as in the derivation of the
exponential Chebyshev\rq{}s Inequality. For any $\mu, \nu \geq 0$,
we have
\begin{eqnarray}
\nonumber &&
    \Pr \left\{ \sum_\w F_\w \prod_{i=1}^{N_d} \Gamma_{w_i}^i
    \geq \mu \right\}
\\ \nonumber &=&
    \Pr \left\{ \nu
    \left( -\mu +\sum_\w F_\w \prod_{i=1}^{N_d} \Gamma_{w_i}^i \right)
    \geq 0 \right\}
\\ \nonumber &=&
    \Pr \left\{ \exp\! \left(
    -\nu\mu+ \nu\sum_\w F_\w \prod_{i=1}^{N_d} \Gamma_{w_i}^i
    \right) \geq 1 \right\}
\\ \label{e14.1} &\leq&
    \E\! \left[ \exp\! \left(
    -\nu\mu+ \nu\sum_\w F_\w \prod_{i=1}^{N_d} \Gamma_{w_i}^i
    \right) \right]
\\ \nonumber &=&
    \E\! \left[ {\rm e}^{-\nu\mu} \prod_\w  \exp\! \left(
    \nu F_\w \prod_{i=1}^{N_d} \Gamma_{w_i}^i
    \right) \right]
\\ \label{e14.2} &=&
    {\rm e}^{-\nu\mu} \prod_\w \E\! \left[ \exp\! \left(
    \nu F_\w \prod_{i=1}^{N_d} \Gamma_{w_i}^i
    \right) \right]
\\ \label{omega} &\leq&
    {\rm e}^{-\nu\mu} \prod_\w \E \left[ 1+
    \nu F_\w \prod_{i=1}^{N_d} \Gamma_{w_i}^i
    + \left( \nu F_\w \prod_{i=1}^{N_d} \Gamma_{w_i}^i
    \right)^{\! 2} \right]\ .
\end{eqnarray}
Here $\E$ stands for the average over all $F_\w$. In~(\ref{e14.1})
we have used that any positive random variable $X$ satisfies
$\Pr\{X\geq 1\} \leq \E[X]$. In~(\ref{e14.2}) we have used that
the $\{ F_\w \}_\w$ are independent. Finally, in~(\ref{omega}) we
have used that ${\rm e}^{\eta} \leq 1 +\eta + \eta^2$, which is
only valid if $\eta \leq 1$. Therefore, we must show that
\begin{equation}\label{tocheck}
    \left|\frac \nu 2 \prod_{i=1}^{N_d} \Gamma_{w_i}^i\right| \leq 1,
\end{equation}
which is done below, when setting the value of $\nu$. In what
follows we use the chain of inequalities~(\ref{omega}), the fact
that $\E [F_\w] = 0$ and $\E [F_\w^2] = 1/4$, bound $1+ \eta \leq
{\rm e}^{\eta}$ for $\eta \geq 0$, and the definition $\Omega_i^2
= (\Gamma_0^i)^2 + (\Gamma_1^i)^2$:
\begin{eqnarray}
&& \nonumber
    \Pr \left\{ \sum_\w F_\w \prod_{i=1}^{N_d} \Gamma_{w_i}^i
    \geq \mu \right\}
\\ \nonumber &\leq&
    {\rm e}^{-\nu\mu} \prod_\w \left( 1+
    \E [F_\w]\, \nu \prod_{i=1}^{N_d} \Gamma_{w_i}^i
    + \E [ F_\w^2 ]\, \nu^2 \prod_{i=1}^{N_d} \left(\Gamma_{w_i}^i
    \right)^{2} \right)
\\ \nonumber &=&
    {\rm e}^{-\nu\mu} \prod_\w \left( 1+
    \frac{\nu^2}{4} \prod_{i=1}^{N_d} \left( \Gamma_{w_i}^i
    \right)^2 \right)
\\ \nonumber &\leq&
    {\rm e}^{-\nu\mu} \prod_\w \exp\! \left(
    \frac{\nu^2}{4} \prod_{i=1}^{N_d} \left( \Gamma_{w_i}^i
    \right)^2 \right)
\\ \nonumber &=&
    \exp\! \left( -\nu\mu +\sum_\w
    \frac{\nu^2}{4} \prod_{i=1}^{N_d} \left( \Gamma_{w_i}^i
    \right)^2 \right)
\\ \label{e18} &=&
    \exp\! \left( -\nu\mu +
    \frac{\nu^2}{4} \prod_{i=1}^{N_d} \Omega_i^2 \right)
\end{eqnarray}
In order to optimize this upper bound, we minimize the  exponent
over $\nu$. This is done by differentiating with respect to $\nu$
and equating to zero, which gives
\begin{equation}\label{nu*}
    \nu = 2\, \mu \prod_{i=1}^{N_d} \Omega_i^{-2}\ .
\end{equation}
Note that constraint~(\ref{const4}) implies that the inverse of
$\Omega_i$ exists. Since we assume $\mu \geq 0$, the initial
assumption $\nu \geq 0$ is satisfied by the solution~(\ref{nu*}).
By substituting~(\ref{nu*}) in~(\ref{e18}) and rescaling the free
parameter $\mu$ as
\begin{equation}\label{tm}
    \tilde\mu = \frac{\mu} {\prod_{i=1}^{N_d} \Omega_i} \ ,
\end{equation}
we obtain
\begin{equation}\label{xx}
    \Pr \left\{ \sum_\w F_\w \prod_{i=1}^{N_d} \Gamma_{w_i}^i
    \geq \tilde\mu \prod_{i=1}^{N_d} \Omega_i \right\}
\leq
    {\rm e}^{-\tilde{\mu}^2}\ ,
\end{equation}
for any $\tilde\mu \geq 0$ consistent with
condition~(\ref{tocheck}). We now choose $\tilde{\mu} = 3
\sqrt{N_d}$, see Eq.~\eqref{l1}, getting
\begin{equation}\label{xx2}
    \Pr \left\{ \sum_\w F_\w \prod_{i=1}^{N_d} \Gamma_{w_i}^i
    \geq 3\sqrt{N_d} \prod_{i=1}^{N_d} \Omega_i \right\}
\leq
    {\rm e}^{-9N_d}\ .
\end{equation}
With this assignment, and using ~(\ref{nu*}) and~(\ref{tm}),
condition~(\ref{tocheck}), yet to be fulfilled, becomes
\begin{equation}\label{y}
    3\sqrt{N_d} \prod_{i=1}^{N_d} \frac{|\Gamma_{w_i}^i|}{\Omega_i}
%   \leq \frac{\tau}{3 \sqrt{N_d}}
    \leq 1\ ,
\end{equation}
which now holds because of the initial premise~(\ref{l12}).

Bound~(\ref{xx2}) applies to each of the sequences $(\a_1 ,\x_1),
\ldots (\a_{N_d} ,\x_{N_d})$ satisfying~(\ref{const4}), and there
are at most $4^{5 N_d}$ of them. Hence, the probability that the
random function $f$ does not satisfy the bound
\begin{equation}
    \sum_\w F_\w \prod_{i=1}^{N_d} \Gamma_{w_i}^i
    \geq 3\sqrt{N_d} \prod_{i=1}^{N_d} \Omega_i ,
\end{equation}
for at least one of such sequences, is at most $4^{5 N_d} {\rm
e}^{-9N_d}$, which is smaller than $1/2$ for any value of $N_d$. A
similar argument proves that the probability that the random
function $f$ does not satisfy the bound
\begin{equation}
    \sum_\w F_\w \prod_{i=1}^{N_d} \Gamma_{w_i}^i
    \leq -3\sqrt{N_d} \prod_{i=1}^{N_d} \Omega_i ,
\end{equation}
for at least one sequence satisfying~(\ref{const4}) is also
smaller than 1/2. The lemma now easily follows from these two
results.
\end{proof}

%%%%%%%%%%%%%%%%%%%%%%%%%%%%%%%%%%%%%%%%%%%%%%%%%%%%%%
%%%%%%%%%%%%%%%%%%%%%%%%%%%%%%%%%%%%%%%%%%%%%%%%%%%%%%
\section{Final remarks}
\label{Sec5}

The main goal of our work was to prove full randomness
amplification. In these appendices, we have shown how
our protocol, based on quantum non-local correlations, achieves
this task. Unfortunately, we are not able to provide an explicit
description of the function $f: \{0,1\}^{N_d} \to \{0,1\}$ which
maps the outcomes of the black boxes to the final random bit $k$;
we merely show its existence. Such function may be obtained
through an algorithm that searches over the set of all functions
until it finds one satisfying~(\ref{l1}). The problem with this
method is that the set of all functions has size $2^{N_d}$, which
makes the search computationally costly. However, this problem can
be fixed by noticing that the random choice of $f$ in the proof of
Lemma~\ref{L1} can be restricted to a four-universal family of
functions, with size polynomial in $N_d$. This observation will be
developed in future work.

A more direct approach could consist of studying how the
randomness in the measurement outcomes for correlations maximally
violating the Mermin inequality increases with the number of
parties. We solved linear optimization problems similar to those
used in Theorem~\ref{Theorem0} which showed that for 7 parties
Eve's predictability is $2/3$ for a function of 5 bits defined by
$f(00000)=0$, $f(01111)=0$, $f(00111)=0$ and $f(\x)=1$ otherwise.
Note that this value is lower than the earlier $3/4$ and also that
the function is different from the majority-vote. We were however
unable to generalize these results for an arbitrary number of
parties, which forced us to adopt a less direct approach. Note in
fact that our protocol can be interpreted as a huge multipartite
Bell test from which a random bit is extracted by classical
processing of some of the measurement outcomes.

We conclude by stressing again that the reason why randomness
amplification becomes possible using non-locality is because the
randomness certification is achieved by a Bell inequality
violation. There already exist several protocols, both in
classical and quantum information theory, in which imperfect
randomness is processed to generate perfect (or arbitrarily close
to perfect) randomness. However, all these protocols, e.g.
two-universal hashing or randomness extractors, always require
additional good-quality randomness to perform such distillation.
On the contrary, if the initial imperfect randomness has been
certified by a Bell inequality violation, the distillation
procedure can be done with a deterministic hash function
(see~\cite{PA} or Lemma~\ref{lem1} above). This property makes
Bell-certified randomness fundamentally different from any other
form of randomness, and is the key for the success of our
protocol.

%%%%%%%%%%%%%%%%%%%%%%%%%%%%%%%%%%%%%%%%%%%%%%%%%%%%%%%%%%%

\end{widetext}


\begin{thebibliography}{19}

\bibitem{bell}
J. S. Bell, Physics \textbf{1}, 195 (1964); \emph{Speakable and
unspeakable in quantum mechanics}, Cambridge University Press
(Cambridge, 1987).

\bibitem{cr}
R. Colbeck and R. Renner, {\it Free randomness can be ampliÞed},
Nature Phys. {\bf 8}, 450 (2012).

\bibitem{sv}
M. Santha and U. V. Vazirani,  {\it in Proc. 25th IEEE Symposium
on Foundations of Computer Science (FOCS-84)}, 434 (IEEE Computer
Society, 1984).

\bibitem{laplace}
P. S. Laplace, \emph{A Philosophical Essay on Probabilities},
Paris (1840).

\bibitem{epr}
A. Einstein, B. Podolsky and N. Rosen, Phys. Rev., \textbf{47},
777-780 (1935).



\bibitem{bohm}
D. Bohm, Phys. Rev. \textbf{85}, 166-179 (1952); Phys. Rev.
\textbf{85}, 180-193 (1952).


\bibitem{chained}
S. L. Braunstein and C. M. Caves, {\it  Wringing out better Bell inequalities}, Ann. Phys. {\bf 202}, 22 (1990).

\bibitem{nature}
S. Pironio {\it et al.}, {\it Random numbers certified by Bell's theorem}, Nature \textbf{464}, 1021 (2010).

\bibitem{colbeck}
R. Colbeck, \emph{Quantum and Relativistic Protocols for Secure
Multi-Party Computation}, PhD dissertation, Univ. Cambridge
(2007).

\bibitem{amp}
A. Ac\'\i n, S. Massar and S. Pironio, Phys. Rev. Lett.
\textbf{108}, 100402 (2012).

\bibitem{pm}
S. Pironio and S. Massar,  arXiv:1111.6056.

\bibitem{cwi}
S. Fehr,  R. Gelles and C. Schaffner, arXiv:1111.6052.

\bibitem{vv}
U. V. Vazirani and T. Vidick, Proceedings of the ACM Symposium on
the Theory of Computing (2012).

\bibitem{KPB}
J. Kofler, T. Paterek, and C. Brukner, {\it Experimenter's freedom in Bell's theorem and quantum cryptography}, Phys. Rev. A {\bf 73}, 022104 (2006).

\bibitem{bg}
J. Barrett and N. Gisin, {\it How much measurement independence is needed to
demonstrate nonlocality?} Phys. Rev. Lett. {\bf 106}, 100406 (2011).

\bibitem{hall}
M. J. W. Hall, {\it  Local deterministic model of singlet state correlations based on
relaxing measurement independence},  Phys. Rev. Lett. {\bf 105}, 250404 (2010).

\bibitem{singapore}
D. E. Koh, M. J. W. Hall, Setiawan, J. E. Pope, C. Marletto, A. Kay, V. Scarani, and A. Ekert, {\it The effects of reduced `free will" on Bell-based randomness expansion}, arxiv:1202.3571.

\bibitem{ghz}
D. M. Greenberger, M. A.  Horne, and A. Zeilinger, in {\it Bell's Theorem, Quantum Theory, and Conceptions
of the Universe} (Kluwer, Dordrecht), p. 69 (1989).

\bibitem{mermin}
N. D. Mermin, {\it Simple unified form for the major no-hidden-variables theorems},  Phys. Rev. Lett. {\bf 65}, 3373 (1990).

\bibitem{masanes}
L. Masanes, {\it Universally-composable privacy amplification from causality constraints}, Phys. Rev. Lett. \textbf{102}, 140501 (2009).

\bibitem{Canetti} R. Canetti; Proc. 42nd IEEE Symposium on Foundations
of Computer Science (FOCS), 136 (2001).

\end{thebibliography}

\begin{thebibliography}{99}

\bibitem{sva}
M. Santha and U. V. Vazirani,  {\it in Proc. 25th IEEE Symposium
on Foundations of Computer Science (FOCS-84)}, 434 (IEEE Computer
Society, 1984).

\bibitem{cra}
R. Colbeck and R. Renner, {\it Free randomness can be ampliÞed},
Nature Phys. {\bf 8}, 450 (2012).

\bibitem{Mermin}
N.~D. Mermin, {\it Extreme quantum entanglement in a superposition
of macroscopically distinct states},  Phys. Rev. Lett. {\bf 65},
1838 (1990).

\bibitem{merminN}
D.~N. Klyshko, Phys. Lett. A \textbf{172}, 399 (1993); A.~V.
Belinskii and D.~N. Klyshko, Physics - Uspekhi \textbf{36}, 653
(1993); N. Gisin, H. Bechmann-Pasquinucci, Phys.Lett. A
\textbf{246}, 1-6 (1998).

\bibitem{Canetti} R. Canetti; Proc. 42nd IEEE Symposium on Foundations
of Computer Science (FOCS), 136 (2001).

\bibitem{PA} L. Masanes; {\it Universally-composable privacy amplification from causality constraints};
Phys. Rev. Lett. {\bf 102}, 140501 (2009).

\bibitem{bhk}
J. Barrett, L. Hardy and A. Kent, {\it No signalling and quantum
key distribution}, Phys. Rev. Lett. \textbf{95}, 010503 (2005).

\bibitem{convex}    S. Boyd,    L. Vandenberghe; {\it Convex Optimization},
Cambridge University Press New York, NY, USA (2004). ISBN:0521833787

\end{thebibliography}
\end{document}